
\documentclass{llncs}
\usepackage{amsmath,amssymb}
\usepackage{tabls}
\usepackage{graphicx}
\usepackage{wrapfig}
\usepackage{array}
\usepackage[algoruled,linesnumbered,algo2e,vlined]{algorithm2e}
\usepackage{url}
\usepackage{amssymb}
\usepackage{multirow}
\usepackage{multicol}
\usepackage[final,mode=multiuser]{fixme}
\fxuseenvlayout{colorsig}
\fxusetargetlayout{color}
\FXRegisterAuthor{ti}{ati}{TI}
\FXRegisterAuthor{hb}{ahb}{HB}
\FXRegisterAuthor{si}{asi}{SI}
\FXRegisterAuthor{mt}{amt}{MT}
\definecolor{kgnote}{rgb}{1.0000,0.0000,0.0000}
\fxsetface{env}{}

\newcommand{\Occ}{\mathit{Occ}}

\newcommand{\LCP}{\mathit{lcp}}
\newcommand{\LCS}{\mathit{lcs}}

\newcommand{\plen}{\mathit{plen}}
\newcommand{\slen}{\mathit{slen}}

\newcommand{\triple}[3]{\langle#1,#2,#3\rangle}
\newcommand{\quintuplet}[5]{\langle#1,#2,#3,#4,#5\rangle}
\newcommand{\gpal}[2]{\langle#1,#2\rangle_{g}}
\newcommand{\Run}{\mathit{Run}}

\newcommand{\kushiRun}{\mathit{Run}^{\xi}}
\newcommand{\gpals}{g\mathit{Pals}}
\newcommand{\kushigpals}{g\mathit{Pals}^{\xi}}
\newcommand{\overl}[1]{\overleftarrow{#1}}
\newcommand{\overr}[1]{\overrightarrow{#1}}
\newcommand{\repl}[2]{\overl{\mathit{rep}}_{#2}({#1})}
\newcommand{\repr}[2]{\overr{\mathit{rep}}_{#2}({#1})}
\newcommand{\rev}[1]{#1^R}

\newlength\savedwidth

\newcommand{\height}{\mathit{height}}

\newcommand{\LS}{\mathsf{LS}}
\newcommand{\LCE}{\mathsf{LCE}}
\newcommand{\Match}{\mathsf{Match}}
\newcommand{\FirstMismatch}{\mathsf{FirstMismatch}}
\newcommand{\OccKushi}{\mathit{Occ^{\xi}}}

\newcommand{\Oh}[1]
    {\ensuremath{\mathcal{O}\!\left( {#1} \right)}}
\newcommand{\derive}{\ensuremath{\mathit{val}}}

\pagestyle{plain}

\author{
}
\institute{
  Department of Informatics, Kyushu University\\
  \email{@inf.kyushu-u.ac.jp}\\
}
\title{
  Detecting regularities on grammar-compressed strings
}

 \author{
   Tomohiro I\inst{1,2}
   \and
   Wataru Matsubara\inst{3}
   \and
   Kouji Shimohira\inst{1}
   \and\\
   Shunsuke Inenaga\inst{1}
   \and 
   Hideo Bannai\inst{1}
   \and 
   Masayuki Takeda\inst{1}
   \and \\
   Kazuyuki Narisawa\inst{3} 
   \and 
   Ayumi Shinohara\inst{3}
 }

 \institute{
   Department of Informatics, Kyushu University, Japan \\
   \email{\{tomohiro.i, inenaga, bannai, takeda\}@inf.kyushu-u.ac.jp}
   \and
   Japan Society for the Promotion of Science (JSPS)
   \and
   Graduate School of Information Sciences, Tohoku University, Japan \\
   \email{\{narisawa,ayumi\}@ecei.tohoku.ac.jp}
 }

\begin{document}
\maketitle

\begin{abstract}
We solve the problems of detecting and counting various forms of
regularities in a string represented as a Straight Line Program (SLP).
Given an SLP of size $n$ that represents a string $s$ of length $N$,
our algorithm compute all runs and squares in $s$
in $O(n^3h)$ time and $O(n^2)$ space,
where $h$ is the height of the derivation tree of the SLP.
We also show an algorithm to compute all gapped-palindromes
in $O(n^3h + gnh\log N)$ time and $O(n^2)$ space, 
where $g$ is the length of the gap.
The key technique of the above solution also allows us to compute
the periods and covers of the string in $O(n^2 h)$ time
and $O(nh(n+\log^2 N))$ time, respectively.
\end{abstract}

\section{Introduction} \label{sec:introduction}

Finding regularities such as squares, runs, and palindromes
in strings, is a fundamental and important problem in
stringology with various applications,
and many efficient algorithms have been proposed
(e.g.,~\cite{MainLorentz84,crochemore91:_effic,AB96,jansson07:_onlin_dynam_recog_squar_strin,Manacher75,ApostolicoBG95,KolpakovK99}).
See also~\cite{CrochemoreIR09} for a survey.

In this paper, we consider the problem of detecting
regularities in a string $s$ of length $N$ that is given in a compressed form,
namely, as a straight line program (SLP),
which is essentially a context free grammar in the Chomsky normal form
that derives only $s$.
Our model of computation is the word RAM:
We shall assume that the computer word size is at least $\lceil \log_2 N \rceil$, 
and hence, standard operations on
values representing lengths and positions of string $s$
can be manipulated in constant time.
Space complexities will be determined by the number of computer words (not bits).

Given an SLP whose size is $n$ and the height of its derivation tree is $h$,
Bannai et al.~\cite{Bannai2012eat} showed how to test 
whether the string $s$ is square-free or not, in $O(n^3 h \log N)$ time and $O(n^2)$ space.
Independently, Khvorost~\cite{Khvorost2012CAS} presented an algorithm for computing a compact representation of all squares in $s$
in $O(n^3 h \log^2 N)$ time and $O(n^2)$ space.
Matsubara et al.~\cite{matsubara_tcs2009} showed that 
a compact representation of all maximal palindromes occurring in the string $s$
can be computed in $O(n^3 h)$ time and $O(n^2)$ space.
Note that the length $N$ of the decompressed string 
$s$ can be as large as $O(2^n)$ in the worst case.
Therefore, in such cases
these algorithms are more efficient than \emph{any} algorithm
that work on uncompressed strings.

In this paper we present the following extension and improvements to the above work, 
namely,
\begin{enumerate}
\item \label{result:runs} an $O(n^3h)$-time $O(n^2)$-space algorithm for computing
a compact representation of squares and runs;
\item \label{result:gapped_pals} 
an $O(n^3h + gnh\log N)$-time $O(n^2)$-space algorithm for computing
a compact representation of palindromes with a gap (spacer) of length $g$.
\end{enumerate}
We remark that our algorithms can easily be extended to count the number of 
squares, runs, and gapped palindromes in the same 
time and space complexities.

\sinote*{added}{%
Note that Result~\ref{result:runs} improves on the work 
by Khvorost~\cite{Khvorost2012CAS} which requires $O(n^3 h \log^2 N)$ time and $O(n^2)$ space.
The key to the improvement is our new technique of Section~\ref{subsec:approximate_doubling}
called \emph{approximate doubling},
which we believe is of independent interest.
In fact, using the approximate doubling technique,
one can improve the time complexity of the algorithms of Lifshits~\cite{lifshits07:_proces_compr_texts}
to compute the periods and covers of a string given as an SLP,
in $O(n^2h)$ time and $O(nh(n+\log^2N))$ time, respectively.
}%

\sinote*{added}{%
If we allow no gaps in palindromes (i.e., if we set $g = 0$),
then Result~\ref{result:gapped_pals} implies that we can compute 
a compact representation of all maximal palindromes in $O(n^3h)$ time and $O(n^2)$ space.
Hence, Result~\ref{result:gapped_pals} can be seen as a generalization of 
the work by Matsubara et al.~\cite{matsubara_tcs2009} with the same efficiency.
}%

\section{Preliminaries}

\subsection{Strings}

Let $\Sigma$ be the alphabet,
so an element of $\Sigma^*$ is called a string.
For string $s = xyz$,
$x$ is called a prefix,
$y$ is called a substring,
and $z$ is called a suffix of $s$, respectively.
The length of string $s$ is denoted by $|s|$.
The empty string $\varepsilon$ is a string of length 0,
that is, $|\varepsilon| = 0$.
For $1 \leq i \leq |s|$, $s[i]$ denotes the $i$-th character of $s$.
For $1 \leq i \leq j \leq |s|$,
$s[i..j]$ denotes the substring of $s$ that begins at position $i$
and ends at position $j$.
For any string $s$, let $\rev{s}$ denote the reversed string of $s$,
that is, $\rev{s} = s[|s|] \cdots s[2]s[1]$.
For any strings $s$ and $u$, 
let $\LCP(s, u)$ (resp. $\LCS(s, u)$) denote the length of the longest common prefix (resp. suffix)
of $s$ and $u$.

We say that string $s$ has a \emph{period} $c$ ($0 < c \leq |s|$) 
if $s[i] = s[i+c]$ for any $1 \leq i \leq |s|-c$.
For a period $c$ of $s$, we denote $s = u^q$, 
where $u$ is the prefix of $s$ of length $c$ and $q = \frac{|s|}{c}$.
For convenience, let $u^0 = \varepsilon$.
If $q \geq 2$, $s = u^q$ is called a \emph{repetition} with root $u$ and period $|u|$.
Also, we say that $s$ is \emph{primitive} if there is no string $u$ and integer $k > 1$ 
such that $s = u^k$.
If $s$ is primitive, then $s^2$ is called a \emph{square}.

We denote a repetition in a string $s$ by a triple $\triple{b}{e}{c}$ 
such that $s[b..e]$ is a repetition with period $c$.
A repetition $\triple{b}{e}{c}$ in $s$ is called a \emph{run} (or \emph{maximal periodicity} in \cite{Main1989Dlm}) if
$c$ is the smallest period of $s[b..e]$ and the substring cannot be extended to the left nor to the right with the same period, namely
neither $s[b-1..e]$ nor $s[b..e+1]$ has period $c$.
Note that for any run $\triple{b}{e}{c}$ in $s$, 
every substring of length $2c$ in $s[b..e]$ is a square.
Let $\Run(s)$ denote the set of all runs in $s$.

A string $s$ is said to be a \emph{palindrome}
if $s = \rev{s}$.
\sinote*{redefined}{%
A string $s$ said to be a \emph{gapped} palindrome
if $s = x u \rev{x}$ for some string $u \in \Sigma^*$.
Note that $u$ may or may not be a palindrome.
The prefix $x$ (resp. suffix $\rev{x}$) of $xu\rev{x}$
is called the \emph{left arm} (resp. \emph{right arm}) of gapped palindrome $x u \rev{u}$.
If $|u| = g$, then $x u \rev{x}$ is said to be a $g$-gapped palindrome.
}%
We denote a \emph{maximal $g$-gapped palindrome} in a string $s$ by a pair $\gpal{b}{e}$
such that $s[b..e]$ is a $g$-gapped palindrome and $s[b-1..e+1]$ is not.
Let $\gpals(s)$ denote the set of all maximal $g$-gapped palindromes in $s$.

Given a text string $s \in \Sigma^+$ and a pattern string $p \in \Sigma^+$, 
we say that $p$ occurs at position $i$ ($1 \leq i \leq |s| - |p| + 1$) iff $s[i..i+|p|-1] = p$.
Let $\Occ(s, p)$ denote the set of positions where $p$ occurs in $s$.
For a pair of integers $1 \leq b \leq e$, 
$[b, e] = \{b, b+1, \dots, e\}$ is called an \emph{interval}.
\begin{lemma}[\cite{MST97}]\label{lem:ap}
For any strings $s, p \in \Sigma^+$ and any interval $[b, e]$ with $1 \leq b \leq e \leq b + |p|$, 
$\Occ(s, p) \cap [b, e]$ forms a single arithmetic progression if $\Occ(s, p) \cap [b, e] \neq \emptyset$.
\end{lemma}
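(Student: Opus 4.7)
The plan is to reduce to the case of three consecutive occurrences: if every three consecutive members of $\Occ(s,p) \cap [b,e]$ sit in arithmetic progression, then the whole intersection is an AP. So assume $\Occ(s,p) \cap [b,e] = \{i_1 < \cdots < i_k\}$ with $k \geq 3$, fix any three consecutive members $i_a < i_b < i_c$, and set $d_1 = i_b - i_a$ and $d_2 = i_c - i_b$.

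Since $i_c - i_a \leq e - b \leq |p|$, the usual overlap argument (two aligned copies of $p$ at distance $d \leq |p|$ force $p[1..|p|-d] = p[d+1..|p|]$) shows that any two occurrences in $[b,e]$ give a period of $p$ equal to their distance. Hence $d_1$, $d_2$, and $d_1+d_2$ are all periods of $p$. Because $d_1 + d_2 \leq |p|$, the Fine--Wilf theorem applies to the pair $(d_1, d_2)$ and yields that $g := \gcd(d_1, d_2)$ is also a period of $p$.

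Next I would show that $p$ in fact occurs in $s$ at \emph{every} position $i_a + jg$ for $0 \leq j \leq (d_1+d_2)/g$. Since $d_1, d_2 \leq |p|$, the three known copies of $p$ at $i_a, i_b, i_c$ together cover the window $[i_a, i_c + |p|-1]$ with no gap, so each position in the window is determined by at least one of the three copies. For a fixed $j$ and offset $t \in [0, |p|-1]$, a short case split (according to which copy covers position $i_a + jg + t$) expresses $s[i_a + jg + t]$ as $p[jg+t+1]$, $p[jg+t-d_1+1]$, or $p[jg+t-d_1-d_2+1]$, each of which equals $p[t+1]$ because $g$ is a period of $p$ and $g$ divides both $d_1$ and $d_2$. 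The main obstacle, I expect, is this case analysis: checking that the three copies really do cover the whole window (using $d_1, d_2 \leq |p|$), that the overlaps between copies are consistent (they are, via the period-$d_1$ and period-$d_2$ properties), and that the reductions modulo $g$ keep every index inside $[1,|p|]$.

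Once these intermediate occurrences are in place, $\Occ(s,p) \cap [i_a, i_c]$ contains the $(d_1+d_2)/g + 1$ positions $\{i_a + jg\}_j$, but by consecutivity of $i_a, i_b, i_c$ it has size exactly $3$, so $(d_1+d_2)/g \leq 2$. Combined with $g \mid d_1, g \mid d_2$ and $d_1, d_2 \geq g$, this forces $d_1 = d_2 = g$, so $i_a, i_b, i_c$ are equally spaced. Applying this to every consecutive triple in $\{i_1,\ldots,i_k\}$ completes the proof.
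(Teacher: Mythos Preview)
The paper does not actually prove this lemma; it is quoted from~\cite{MST97} without argument, so there is no in-paper proof to compare against. Your proposal is a correct and standard proof of this classical fact: the reduction to consecutive triples, the observation that all pairwise distances within the window are periods of $p$, the application of Fine--Wilf (valid since $d_1+d_2\le |p|$ is stronger than the Fine--Wilf hypothesis $d_1+d_2-\gcd(d_1,d_2)\le |p|$), and the counting step forcing $d_1=d_2=g$ are all sound. The coverage/case-analysis you flag as the ``main obstacle'' goes through exactly as you outline, because $d_1,d_2\le |p|$ guarantees the three copies tile $[i_a,\,i_c+|p|-1]$, and every index you produce lies in $[1,|p|]$ and differs from $t+1$ by a multiple of $g$.
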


\subsection{Straight-line programs}

A \emph{straight-line program} (\emph{SLP}) $\mathcal{S}$ of size $n$
is a set of productions
$\mathcal{S} = \{X_i \rightarrow \mathit{expr}_i\}_{i = 1}^{n}$,
where each $X_i$ is a distinct variable and
each $\mathit{expr}_i$ is either
$\mathit{expr_i} = X_\ell X_r~(1 \leq \ell, r < i)$,
or $\mathit{expr_i} = a$ for some $a \in \Sigma$.
Note that $X_n$ derives only a single string and, therefore, we
view the SLP as a compressed representation of the string $s$ 
that is derived from the variable $X_n$.
Recall that the length $N$ of the string $s$ can be as large as $O(2^n)$.
However, it is always the case that $n \geq \log N$.
For any variable $X_i$, let $\derive(X_i)$ denote the string that is
derived from variable $X_i$.
Therefore, $\derive(X_n) = s$.
When it is not confusing, 
we identify $X_i$ with the string represented by $X_i$.

Let $T_i$ denote the derivation tree of a variable $X_i$ of an SLP $\mathcal{S}$.
The derivation tree of $\mathcal{S}$ is $T_n$ (see also Fig.~\ref{fig:SLP} in Appendix C).
Let $\height(X_i)$ denote the height of the derivation tree $T_i$ of $X_i$ 
and $\height(\mathcal{S}) = \height(X_n)$.
We associate each leaf of $T_i$
with the corresponding position of the string $\derive(X_i)$.
For any node $z$ of the derivation tree $T_{i}$,
let $\ell_z$ be the number of leaves to the left of $z$ in $T_{i}$.
The position of $z$ in $T_i$ is $\ell_z + 1$.

Let $[u,v]$ be any integer interval with $1 \leq u \leq v \leq |\derive(X_i)|$.
We say that the interval $[u,v]$ \emph{crosses the boundary} of node $z$ in $T_{i}$,
if the lowest common ancestor of the leaves $u$ and $v$ in $T_{i}$ is $z$.
We also say that the interval $[u, v]$ \emph{touches the boundary} of node $z$ in $T_i$,
if either $[u-1, v]$ or $[u, v+1]$ crosses the boundary of $z$ in $T_i$.
Assume $p = w[u..u+|p|-1]$ and interval $[u,u+|p|-1]$ crosses or touches the boundary of
node $z$ in $T_i$.
When $z$ is labeled by $X_j$, then we also say that 
the occurrence of $p$ starting
at position $u$ in $\derive(X_i)$ crosses or touches the boundary of $X_j$.

\begin{lemma}[\cite{philip11:_random_acces_gramm_compr_strin}] \label{lem:random_access}
Given an SLP $\mathcal{S}$ of size $n$ describing string $w$ of length $N$,
we can pre-process $\mathcal{S}$ in $O(n)$ time and space
to answer the following queries in $O(\log N)$ time:
\begin{itemize}
\item Given a position $u$ with $1 \leq u \leq N$, answer the character $w[u]$.
\item Given an interval $[u,v]$ with $1 \leq u \leq v \leq N$, 
      answer the node $z$ the interval $[u,v]$ crosses,
      the label $X_i$ of $z$, 
      and the position of $z$ in $T_{\mathcal{S}} = T_n$.
\end{itemize}
\end{lemma}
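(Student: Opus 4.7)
My plan is to equip the SLP with a small auxiliary structure that lets us simulate a root-to-leaf descent in the derivation tree $T_n$ without ever spending time proportional to $\height(\mathcal{S})$. As preprocessing, I would first compute $|\derive(X_i)|$ for every variable in a single bottom-up sweep over the productions, using $|\derive(X_i)| = |\derive(X_\ell)|+|\derive(X_r)|$ when $X_i \to X_\ell X_r$ and $|\derive(X_i)|=1$ when $X_i \to a$. This clearly takes $O(n)$ time and space.

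For the single-character query on position $u$, the obvious descent --- at each node $X_i \to X_\ell X_r$ go left if $u \le |\derive(X_\ell)|$ and otherwise right after subtracting $|\derive(X_\ell)|$ from $u$ --- runs in $O(h)$ time, which can exceed $\log N$. To push the bound down to $O(\log N)$, I would apply a heavy path decomposition on each $T_i$: at each variable declare the child with the larger derived length to be \emph{heavy} and the other \emph{light}. Because the derived length at least doubles along every light edge, every root-to-leaf path in any $T_i$ crosses at most $\log N$ light edges. I would then, for each variable that heads a heavy path, precompute a sorted array of the cumulative derived lengths hanging off the light edges of that path. A query then consists of $O(\log N)$ descents along heavy paths, each implemented by a single binary search on the associated array to locate the light edge to exit through, for an overall cost of $O(\log N)$.

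For the interval query $[u,v]$, I would descend with both endpoints in lockstep using the same heavy path structure: so long as $u$ and $v$ belong to the same child of the current variable I move into that child and continue, and the first node at which they separate is exactly the crossing node $z$ that the lemma asks for. Along a single heavy path this amounts to locating the first index whose light-side range separates $u$ from $v$, which is again realised by a binary search on the precomputed array. The label $X_i$ of $z$, together with the offset of $z$'s leftmost leaf (which, combined with $u$, yields the position of $z$ in $T_n$), can be accumulated during the descent at no extra asymptotic cost, giving the claimed $O(\log N)$ time.

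The part I expect to be most delicate is the storage bound. Naively assigning an array to each heavy path and replicating it through the DAG could blow up the space, since a single variable may sit on many heavy paths once $T_n$ is unfolded. The key to staying within $O(n)$ space is to exploit the DAG structure: each variable heads a \emph{canonical} heavy path inside its own derivation subtree, and the information needed for queries can be attached directly to those heads, so the total size of all auxiliary arrays is bounded by the number of light edges summed over productions, which is $O(n)$. Once this bookkeeping is in place, the two query algorithms follow the descent routines described above.
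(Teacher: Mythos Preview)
The paper does not give its own proof of this lemma: it is quoted verbatim from Bille et al.\ (\emph{Random access to grammar-compressed strings}) and used as a black box. Your heavy-path decomposition plan is exactly the mechanism of that cited paper, so at the structural level you are on target, including the DAG-level bookkeeping that keeps the auxiliary arrays to $O(n)$ total size.

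There is, however, a genuine gap in your running-time argument. You claim that $O(\log N)$ heavy paths, each handled by one ordinary binary search, cost $O(\log N)$ in total. That does not follow: a binary search on a heavy path with $k_i$ light children costs $\Theta(\log k_i)$, and while the number of heavy paths visited is $O(\log N)$, the $k_i$'s can sum to as much as $h$ (hence up to $n$), so $\sum_i \log k_i$ is only $O(\log N\cdot\log n)$ in the worst case, not $O(\log N)$. Bille et al.\ close this gap by replacing plain binary search with a weight-biased predecessor structure on each heavy path, in which locating the light child of derived length $s$ inside a path deriving total length $S$ costs $O(\log(S/s))$; these terms telescope along the descent to give $O(\log N)$ overall. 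Your space discussion is also a bit loose --- the phrase ``each variable heads a canonical heavy path inside its own derivation subtree'' suggests per-variable paths, which would not be $O(n)$; what actually works is a single heavy-path decomposition of the DAG, which is what your final sentence seems to intend.
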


\if0
\begin{figure*}[t]
  \begin{center}
    \includegraphics[scale=0.50]{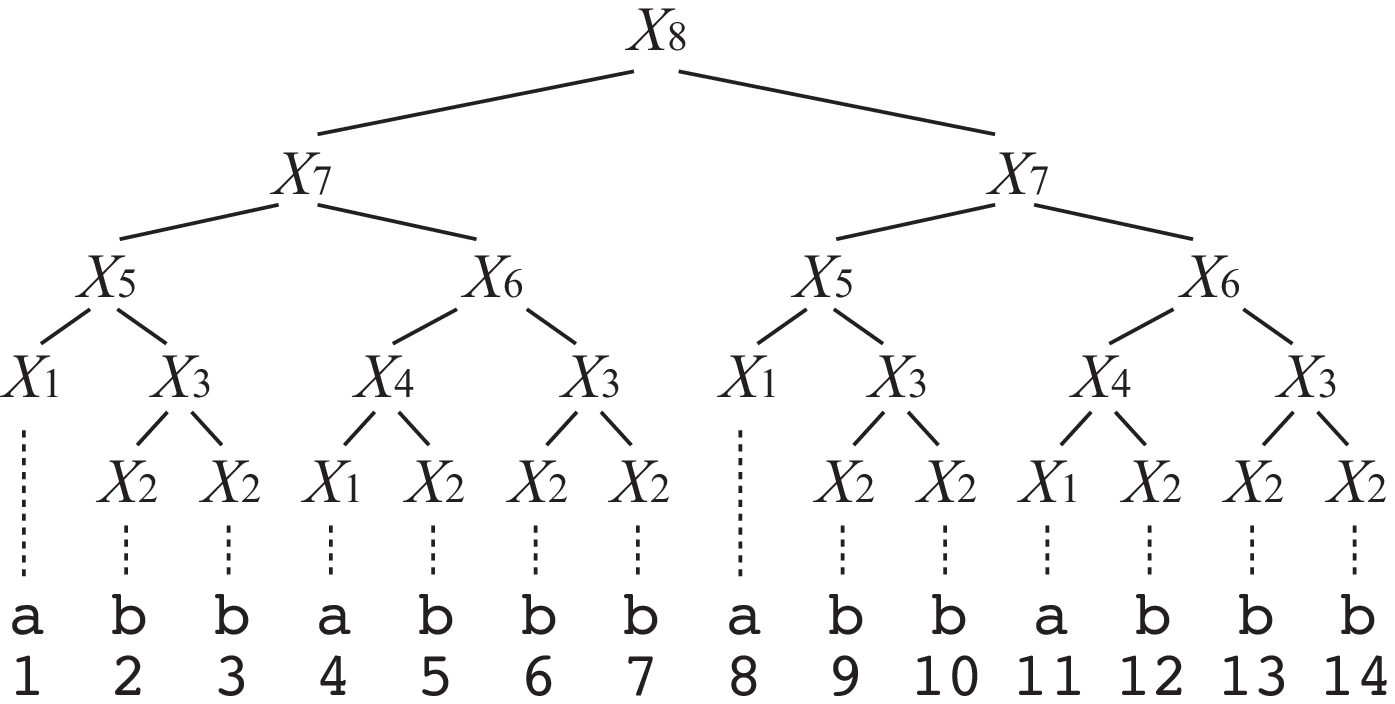}
  \end{center}
  \caption{
    The derivation tree of
    SLP $\mathcal{S} = \{ X_1 \rightarrow \mathtt{a}$, $X_2 \rightarrow \mathtt{b}$, 
    $X_3 \rightarrow X_2 X_2$, $X_4 \rightarrow X_1 X_2$, $X_5 \rightarrow X_1 X_3$,
    $X_6 \rightarrow X_4 X_3$, $X_7 \rightarrow X_5 X_6$, $X_8 \rightarrow X_7 X_7$ \},
    representing string $s = \mathtt{abbabbbabbabbb}$.
  }
  \label{fig:example}
\end{figure*}
\fi

For any production $X_i \rightarrow X_\ell X_r$
and a string $p$,
let $\OccKushi(X_i, p)$ be the set of occurrences of $p$
which begin in $X_\ell$ and end in $X_r$.
Let $\mathcal{S}$ and $\mathcal{T}$ be SLPs of sizes $n$ and $m$, respectively.
Let the AP-table for $\mathcal{S}$ and $\mathcal{T}$ be an $n \times m$ table
such that for any pair of variables $X \in \mathcal{S}$ and $Y \in \mathcal{T}$
the table stores $\OccKushi(X, Y)$.
It follows from Lemma~\ref{lem:ap} that $\OccKushi(X, Y)$ 
forms a single arithmetic progression which requires $O(1)$ space,
and hence the AP-table can be represented in $O(nm)$ space.

\begin{lemma}[\cite{lifshits07:_proces_compr_texts}] \label{lem:SLP_matching}
Given two SLPs $\mathcal{S}$ and $\mathcal{T}$ of sizes $n$ and $m$,
respectively, the AP-table for $\mathcal{S}$ and $\mathcal{T}$ can be
computed in $O(nmh)$ time and $O(nm)$ space,
where $h = \height(\mathcal{S})$.
\end{lemma}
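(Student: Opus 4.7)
The plan is a bottom-up dynamic program over all pairs $(X_i, Y_j) \in \mathcal{S} \times \mathcal{T}$, processing each SLP in topological order of its productions. First, I would establish the AP-structure at the level of a single entry: for $X_i \to X_\ell X_r$, every position contributing to $\OccKushi(X_i, Y_j)$ lies in the interval $[\max(1, |X_\ell| - |Y_j| + 2),\ |X_\ell|]$, whose length is at most $|Y_j|-1$. Lemma~\ref{lem:ap} then forces $\OccKushi(X_i, Y_j)$ to be a single arithmetic progression, representable in $O(1)$ space, which already gives the claimed $O(nm)$ space bound on the full table.

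Base cases are immediate: if $X_i$ is a terminal or $|Y_j| = 1$ then $\OccKushi(X_i, Y_j) = \emptyset$, since no such window can straddle a boundary. For the inductive case $X_i \to X_\ell X_r$ and $Y_j \to Y_a Y_b$, I classify each $p \in \OccKushi(X_i, Y_j)$ by where the boundary of $X_i$ falls inside the occurrence of $Y_j$: (A) strictly inside the $Y_a$-part, equivalently $p \in \OccKushi(X_i, Y_a)$ together with $Y_b$ matching the substring of $X_r$ starting at $p + |Y_a| - |X_\ell|$; (B) strictly inside the $Y_b$-part, symmetric via $\OccKushi(X_i, Y_b)$; and (C) exactly at the split of $Y_a$ and $Y_b$, yielding the single candidate $p = |X_\ell| - |Y_a| + 1$, which is in the set iff $Y_a$ is a suffix of $X_\ell$ and $Y_b$ is a prefix of $X_r$. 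These three subcases are disjoint and their union is the single AP guaranteed above.

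To carry out the inductive step in $O(h)$ time I would rely on two supporting primitives: substring-equality tests on $X_i$ implemented by longest-common-extension (LCE) queries of cost $O(h)$ each, by locating crossing nodes via Lemma~\ref{lem:random_access} and descending $T_{\mathcal{S}}$ in lock-step; and two auxiliary $n \times m$ Boolean tables $P(X,Y)$ and $S(X,Y)$ recording whether $Y$ is a prefix (resp.\ suffix) of $X$, filled in parallel by an analogous recursion over $\mathcal{S}$'s productions to supply the two tests of case~(C). Because we already know $\OccKushi(X_i, Y_j)$ must be a single AP, cases~(A) and (B) do not require one LCE per candidate; instead they reduce to identifying the first element, last element, and common difference of an arithmetic sub-progression of the already-computed AP $\OccKushi(X_i, Y_a)$ (resp.\ $\OccKushi(X_i, Y_b)$), each extractable with $O(1)$ LCE queries. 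Summing $O(h)$ over the $nm$ pairs then yields $O(nmh)$ total time.

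The main obstacle I anticipate is organising the recursion for the prefix/suffix tables $P$ and $S$ so that every descent happens on $\mathcal{S}$'s side, keeping the cost charged to $h = \height(\mathcal{S})$ rather than to $\height(\mathcal{T})$, and, in parallel, arguing rigorously that the ``$Y_b$ follows at $p+|Y_a|$'' filter in case~(A) really needs only $O(1)$ LCE queries. The latter hinges on the forced single-AP structure of the output: the surviving positions in case~(A) must themselves form an arithmetic subprogression of $\OccKushi(X_i, Y_a)$, so constant-many probes at its endpoints and one interior point suffice to recover them.
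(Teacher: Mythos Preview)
The paper does not prove this lemma at all; it is quoted from Lifshits without argument, so there is no in-paper proof to compare against. That said, your dynamic-programming skeleton---process pairs $(X_i,Y_j)$ bottom-up, split $Y_j\to Y_aY_b$, and separate the three cases according to whether the boundary of $X_i$ lands in the $Y_a$-part, in the $Y_b$-part, or exactly at the split---is essentially Lifshits' construction.

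The genuine gap is the primitive you invoke inside the recursion. You rely on ``substring-equality tests on $X_i$ implemented by LCE queries of cost $O(h)$ each, by \ldots\ descending $T_{\mathcal{S}}$ in lock-step,'' but no such $O(h)$ LCE is available at this point: the paper's $\LCE$/$\Match$ machinery (Lemmas~\ref{lem:match}--\ref{lem:LCE}) is built \emph{on top of} the AP-table, so using it here is circular, and a bare lock-step descent of two root-to-leaf paths does not yield substring equality in $O(h)$ without exactly that table. What Lifshits actually does is different and avoids LCE altogether: to handle case~(A) one performs a local search for $Y_b$ in the length-$|Y_b|$ window of $X_i$ where its left end must lie, by descending $T_i$ and reading the \emph{already computed} entries $\OccKushi(\cdot,Y_b)$ at each node on the path; this costs $O(h)$ and returns a single arithmetic progression by Lemma~\ref{lem:ap}, which is then intersected in $O(1)$ time with the shifted $\OccKushi(X_i,Y_a)$. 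This also fixes a second weakness in your plan: your ``constant-many probes at the endpoints and one interior point'' does not suffice to isolate the surviving sub-progression (it could be any contiguous block, requiring a binary search), whereas intersecting two explicit APs is truly $O(1)$. With this correction the auxiliary prefix/suffix tables become unnecessary as well, since case~(C) is just two such local searches at fixed positions.
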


\begin{lemma}[\cite{lifshits07:_proces_compr_texts}, local search ($\LS$)]\label{lem:local_search}
Using AP-table for $\mathcal{S}$ and $\mathcal{T}$ that describe strings $p$ in $s$,
we can compute, given any position $b$ and constant $\alpha > 0$, 
$\Occ(s, p) \cap [b, b + \alpha |p|]$ as a form of at most $\lceil \alpha \rceil$ arithmetic progressions
in $O(h)$ time, where $h = \height(\mathcal{S})$.
\end{lemma}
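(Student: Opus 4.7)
The plan is to reduce the computation of $\Occ(s, p) \cap [b, b + \alpha |p|]$ to at most $\lceil \alpha \rceil$ independent subproblems, each returning a single arithmetic progression. Specifically, I would partition $[b, b + \alpha|p|]$ into $\lceil \alpha \rceil$ sub-windows of length at most $|p|$; by Lemma~\ref{lem:ap}, the occurrences of $p$ starting in each sub-window form a single arithmetic progression, so the full answer is a union of at most $\lceil \alpha \rceil$ APs as required. It then suffices to compute the AP for each sub-window in $O(h)$ time.

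For a fixed sub-window $[u, u+|p|-1]$, every occurrence of $p$ starting at some position $j$ in this range covers position $u + |p| - 1$, since $j \le u + |p| - 1 \le j + |p| - 1$. Therefore, in the derivation tree $T_n$ of $s$, the unique node whose boundary such an occurrence crosses (the LCA of leaves $j$ and $j + |p| - 1$) must be an ancestor of leaf $u + |p| - 1$, and there are at most $h$ such ancestors. The algorithm descends from the root to leaf $u + |p| - 1$, maintaining the absolute position of the current node in $s$ incrementally in $O(1)$ per step; at each node labeled $X$ it reads $\OccKushi(X, p)$ from the AP-table in $O(1)$ time, shifts it by the current offset, and intersects it with the sub-window. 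Since each occurrence in the sub-window has a unique LCA on this path, it is captured exactly once.

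The main technical obstacle is assembling the $O(h)$ disjoint partial APs collected along the path into the single AP guaranteed by Lemma~\ref{lem:ap}. While the contributions from different path nodes are pairwise disjoint as sets of positions, they must still fit together into one arithmetic progression; this can be handled by maintaining a running representation (start, common difference, count) and updating it in $O(1)$ per node, using the structural fact that any two partial APs contained jointly in a window of length $|p|$ have compatible differences. Once the merging is in place, each sub-window costs $O(h)$ and the total cost is $O(\lceil \alpha \rceil h) = O(h)$ for constant $\alpha$, yielding the claim.
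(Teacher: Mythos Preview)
The paper does not give its own proof of this lemma; it is quoted from Lifshits and stated without argument. There is therefore nothing in the paper to compare your proposal against.

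That said, your argument is correct and is essentially the standard one. Two minor remarks. First, with sub-windows of width exactly $|p|$ (which is indeed what the single-anchor trick requires, since a window of width $|p|+1$ has no position common to all candidate occurrences), covering $[b,b+\alpha|p|]$ can take $\lceil\alpha\rceil+1$ rather than $\lceil\alpha\rceil$ windows; this off-by-one does not affect the $O(h)$ bound and is easily absorbed. Second, the merging step you call the ``main technical obstacle'' is not really one: by Lemma~\ref{lem:ap} the union of the partial APs over a sub-window is \emph{a priori} a single arithmetic progression, so after the root-to-anchor descent it suffices to record the total number of elements $k$, the minimum element $m$, and the maximum element $M$ among the $O(h)$ partial APs, and output the progression with first term $m$, $k$ terms, and common difference $(M-m)/(k-1)$ when $k\ge 2$. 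No sorting or pairwise compatibility checking is needed, so the per-window cost is genuinely $O(h)$.
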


Note that, given any $1 \leq i \leq j \leq |s|$, 
we are able to build an SLP of size $O(n)$ that generates substring $s[i..j]$ in $O(n)$ time.
Hence, by computing the AP-table for $\mathcal{S}$ and the new SLP,
we can conduct the local search $\LS$ operation on substring $s[i..j]$ in $O(n^2h)$ time.

For any variable $X_i$ of $\mathcal{S}$ and positions $1 \leq k_1, k_2 \leq |X_i|$,
we define the ``right-right'' longest common extension query by
\[\LCE(X_i, k_1, k_2) = \LCP(X_i[k_1..|X_i|], X_i[k_2..|X_i|]).\]
Using a technique of~\cite{MST97} in conjunction with Lemma~\ref{lem:SLP_matching},
it is possible to answer the query in $O(n^2h)$ time
for each pair of positions, with no pre-processing.
We will later show our new algorithm which, after $O(n^2h)$-time pre-processing,
answers to the $\LCE$ query for any pair of positions in $O(h \log N)$ time.

\section{Finding runs} \label{sec:repetitions}

In this section we propose an $O(n^3h)$-time and $O(n^2)$-space algorithm to compute $O(n \log N)$-size representation of all runs
in a text $s$ of length $N$ represented by SLP $\mathcal{S} = \{X_i \rightarrow \mathit{expr_i}\}_{i = 1}^{n}$ of height $h$.

For each production $X_i \rightarrow X_{\ell(i)} X_{r(i)}$ with $i \leq n$,
we consider the set $\kushiRun(X_i)$ of runs which touch or cross the boundary of $X_i$ and are completed in $X_i$,
i.e., those that are not prefixes nor suffixes of $X_i$.
Formally, 
\[
\kushiRun(X_i) = \{\triple{b}{e}{c} \in \Run(X_i) \mid 1 \leq b-1 \leq |X_{\ell(i)}| < e+1 \leq |X_i| \}.
\]
It is known that for any interval $[b, e]$ with $1 \leq b \leq e \leq |s|$,
there exists a unique occurrence of a variable $X_i$ 
in the derivation tree of SLP, such that 
the interval $[b, e]$ crosses the boundary of $X_i$.
Also, wherever $X_i$ appears in the derivation tree, 
the runs in $\kushiRun(X_i)$ occur in $s$ with some appropriate offset,
and these occurrences of the runs are never contained in $\kushiRun(X_j)$ with 
any other variable $X_j$ with $j \neq i$.
Hence, by computing $\kushiRun(X_i)$ for all variables $X_i$ with $i \leq n$,
we can essentially compute all runs of $s$ that are not prefixes nor suffixes of $s$.
In order to detect prefix/suffix runs of $s$, 
it is sufficient to consider two auxiliary variables 
$X_{n+1} \rightarrow X_{\$} X_n$ and $X_{n+2} \rightarrow X_{n+1} X_{\$^{\prime}}$, 
where $X_{\$}$ and $X_{\$^{\prime}}$ respectively derive special characters $\$$ and $\$^{\prime}$ that are not in $s$ and $\$ \neq \$^{\prime}$.
Hence, the problem of computing the runs from an SLP $\mathcal{S}$ reduces to computing
$\kushiRun(X_i)$ for all variables $X_i$ with $i \leq n + 2$.

Our algorithm is based on the divide-and-conquer method used in~\cite{Bannai2012eat} and also~\cite{Khvorost2012CAS},
which detect squares crossing the boundary of each variable $X_i$.
Roughly speaking, in order to detect such squares
we take some substrings of $\derive(X_i)$ as \emph{seeds}
each of which is in charge of distinct squares,
and for each seed we detect squares by using $\LS$ and $\LCE$ constant times.
There is a difference between~\cite{Bannai2012eat} and~\cite{Khvorost2012CAS}
in how the seeds are taken, and ours is rather based on that in~\cite{Bannai2012eat}.
In the next subsection, we briefly describe our basic algorithm which runs in $O(n^3h \log N)$ time.

\subsection{Basic algorithm} \label{subsec:basic_algorithm}

Consider runs in $\kushiRun(X_i)$ with $X_i \rightarrow X_\ell X_r$.
Since a run in $\kushiRun(X_i)$ contains a square which touches or crosses the boundary of $X_i$,
our algorithm finds a run by first finding such a square,
and then computing the maximal extension of its period to the left and right of its occurrence.

We divide each square $ww$ by its length and how it relates to the boundary of $X_i$.
When $|w| > 1$, there exists $1 \leq t < \log|\derive(X_{i})|$ such that $2^{t} \leq |w| < 2^{t+1}$
and there are four cases (see also Fig.~\ref{fig:cases});
(1) $|w_\ell| \geq \frac{3}{2} |w|$,
(2) $\frac{3}{2} |w| > |w_\ell| \geq |w|$,
(3) $|w| > |w_\ell| \geq \frac{1}{2} |w|$,
(4) $\frac{1}{2} |w| > |w_\ell|$,
where $w_\ell$ is a prefix of $ww$ which is also a suffix of $\derive(X_\ell)$.

\begin{figure*}[t]
  \begin{center}
    \includegraphics[scale=0.70]{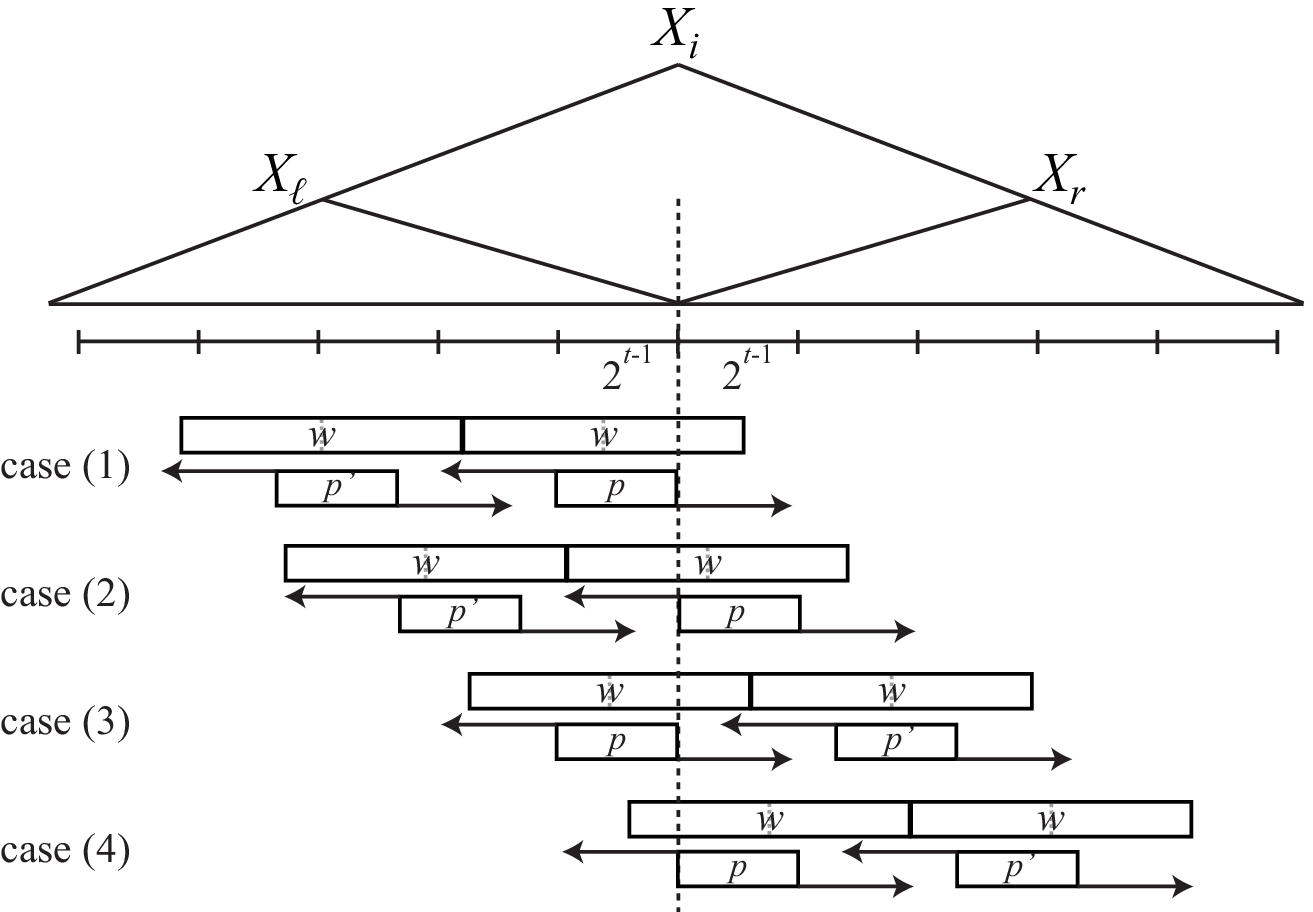}
  \end{center}
  \caption{
    The left arrows represent the longest common suffix between the left substrings immediately to the left of $p$ and $p'$.
    The right arrows represent the longest common prefix between the substrings immediately to the right of $p$ and $p'$.
  }
  \label{fig:cases}
\end{figure*}

The point is that in any case we can take a substring $p$ of length $2^{t-1}$ 
of $s$ which touches the boundary of $X_i$, and is completely contained in $w$.
By using $p$ as a seed we can detect runs by the following steps:
\begin{description}
  \item[Step 1:] Conduct local search of $p$ in an ``appropriate range'' of $X_{i}$,
        and find a copy $p'~(= p)$ of $p$. \label{enum:matching}
  \item[Step 2:] Compute the length $\plen$ of the longest common prefix to the right of $p$ and $p'$,
        and the length $\slen$ of the longest common suffix to the left of $p$ and $p'$, then
        check that $\plen + \slen \geq d - |p|$, 
        where $d$ is the distance between the beginning positions of $p$ and $p'$. \label{enum:lcp}
\end{description}
Notice that Step 2 actually computes maximal extension of the repetition.

Since $d = |w|$, 
it is sufficient to conduct local search in the range satisfying $2^{t} \leq d < 2^{t+1}$, namely,
the width of the interval for local search is smaller than $2|p|$, and all occurrences of $p'$ are represented by at most two arithmetic progressions.
Although exponentially many runs can be represented by an arithmetic progression,
its periodicity enables us to efficiently detect all of them, by using $\LCE$ only constant times, 
and they are encoded in $O(1)$ space.
We put the details in Appendix A since the employed techniques are essentially 
the same as in~\cite{Khvorost2012CAS}.

By varying $t$ from $1$ to $\log N$, we can obtain an $O(\log N)$-size compact representation of $\kushiRun(X_i)$ in $O(n^2h \log N)$ time.
More precisely, we get a list of 
$O(\log N)$ quintuplets $\quintuplet{\delta_1}{\delta_2}{\delta_3}{c}{k}$ such that
the union of sets $\bigcup_{j = 0}^{k-1}\triple{\delta_1-cj}{\delta_2+cj}{\delta_3+cj}$ 
for all elements of the list equals to $\kushiRun(X_i)$ without duplicates.
By applying the above procedure to all the $n$ variables, 
we can obtain an $O(n \log N)$-size compact representation of all runs in $s$ in $O(n^3h \log N)$ time.
The total space requirement is $O(n^2)$,
since we need $O(n^2)$ space at each step of the algorithm.

In order to improve the running time of the algorithm to $O(n^3h)$,
we will use new techniques of the two following subsections.

\subsection{Longest common extension} \label{sec:lce}

In this subsection we propose a more efficient algorithm for $\LCE$ queries.
\begin{lemma} \label{lem:LCE}
We can pre-process an SLP $\mathcal{S}$ of size $n$
and height $h$ in $O(n^2 h)$ time and $O(n^2)$ space, 
so that given any variable $X_i$ and positions $1 \leq k_1, k_2 \leq |X_i|$,
$\LCE(X_i, k_1, k_2)$ is answered in $O(h \log N)$ time.
\end{lemma}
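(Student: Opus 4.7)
The plan is to share work across queries by precomputing two structures: the AP-table of $\mathcal{S}$ against itself via Lemma~\ref{lem:SLP_matching}, and an $n\times n$ table $L$ storing $\LCP(\derive(X_a),\derive(X_b))$ for every ordered pair of variables. Every $\LCE$ query is then resolved by a single guided descent through the derivation tree of $X_i$.

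\emph{Preprocessing.} First build the random-access structure of Lemma~\ref{lem:random_access} and the AP-table in $O(n)$ and $O(n^2h)$ time respectively, occupying $O(n^2)$ space in total. For the table $L$, I would process pairs in an order consistent with the grammar DAG. For $X_a\to X_{a_1}X_{a_2}$ and $X_b\to X_{b_1}X_{b_2}$ the recurrence begins by consulting $L[a_1,b_1]$: if it is less than $\min(|X_{a_1}|,|X_{b_1}|)$ then $L[a,b]=L[a_1,b_1]$; otherwise one of the first children is a prefix of the other and the comparison must be continued across the boundary.

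\emph{Query.} To compute $\LCE(X_i,k_1,k_2)$, descend $T_i$ from the root, maintaining two frontier positions. At each internal node $X_j\to X_{j_\ell}X_{j_r}$ visited, use one random-access query (Lemma~\ref{lem:random_access}, $O(\log N)$) to identify the variables into which the two frontier positions fall, then look up at most a constant number of entries of $L$ to decide whether the first mismatch lies in $X_{j_\ell}$ or in $X_{j_r}$, and descend into that subtree while accumulating the length already matched on the other side. The descent has depth $O(h)$ and each step costs $O(\log N)$, giving the claimed $O(h\log N)$ query bound.

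\emph{Main obstacle.} The technical heart of the argument is to populate $L$ within the $O(n^2h)$ time budget. The awkward case of the recurrence arises when, say, $X_{a_1}$ is a proper prefix of $X_{b_1}$: the continuation then demands comparing $X_{a_2}$ with the non-aligned suffix $X_{b_1}[|X_{a_1}|+1..]X_{b_2}$, which is not itself a variable of the grammar. My plan for overcoming this is to decompose that suffix, via one random-access query, into the $O(h)$ grammar variables lying along a root-to-leaf path in $T_b$, and then match $X_{a_2}$ against this sequence block by block, consulting only entries $L[a_2,\cdot]$ whose right operand is strictly smaller than $X_b$ and therefore already filled by the bottom-up order. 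Each block costs $O(1)$ after the initial $O(\log N)$ descent, so the work per table entry is $O(h)$; combined with $h\ge\log_2 N$ (which holds for any SLP since $|X_n|\le 2^h$), this gives $O(n^2h)$ total. Making this bookkeeping rigorous---in particular, justifying that every lookup into $L$ is to an already-computed entry---is the step I expect to require the most care.
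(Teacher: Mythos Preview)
Your plan has a genuine gap in exactly the place you flag. In the ``awkward case'' of filling $L[a,b]$, after consuming the first block $Y_1$ (i.e.\ after verifying via $L[a_2,Y_1]=|Y_1|$ that $Y_1$ is a prefix of $X_{a_2}$), you must next compare $X_{a_2}[\,|Y_1|{+}1..\,]$ against $Y_2$. But $L[a_2,Y_2]$ stores $\LCP(X_{a_2},Y_2)$, not $\LCP(X_{a_2}[\,|Y_1|{+}1..\,],Y_2)$: the table only knows about comparisons that start at position~$1$ of a variable. To continue you would have to decompose that suffix of $X_{a_2}$ into further grammar blocks, and the same misalignment can recur on the next step, so the ``block by block'' scan does not terminate in $O(h)$ lookups. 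The identical difficulty surfaces in your query procedure: a ``single descent'' through $T_i$ cannot track two positions that may lie in different children, and if you instead decompose both suffixes into $O(h)$ blocks and merge, the first time one side exhausts a block while the other is mid-block you are again at an offset where $L$ does not apply.

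The paper sidesteps all of this by \emph{not} precomputing pairwise LCPs. Its only preprocessing is the AP-table (Lemma~\ref{lem:SLP_matching}) together with Lemma~\ref{lem:random_access}. From these it derives two query primitives: $\Match(X_i,X_j,k)$, which in $O(\log N)$ time tests whether $X_j$ occurs at position $k$ of $X_i$ (one random-access call to locate the crossing node, then one AP-table lookup), and $\FirstMismatch(X_i,X_j,k)=\LCP(X_i[k..\,],X_j)$, computed by recursing on the production of $X_j$ and calling $\Match$ once per level, hence $O(h\log N)$ time. For $\LCE(X_i,k_1,k_2)$ the paper decomposes the suffix $X_i[k_2..\,]$ into a list $U$ of $O(h)$ variables along two root-to-leaf paths, scans $U$ left to right calling $\Match$ against $X_i$ at the running offset from $k_1$, and upon the first failure invokes $\FirstMismatch$ once on that block. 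The key point is that both primitives take an \emph{arbitrary} starting offset $k$ in their first argument, which is precisely what your $L$ table lacks.
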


To compute $\LCE(X_i, k_1, k_2)$
we will use the following function:
For an SLP $\mathcal{S} = \{X_i \rightarrow expr_i\}_{i=1}^{n}$,
let $\Match$ be a function such that 
\[
  \Match(X_i, X_j, k) = 
   \begin{cases}
    \mathrm{true} & \mathrm{if } \ k \in \Occ(X_i, X_j), \\
    \mathrm{false} & \mathrm{if } \ k \notin \Occ(X_i, X_j). \\
   \end{cases}
\]
\begin{lemma} \label{lem:match}
We can pre-process a given SLP $\mathcal{S}$ of size $n$ and height $h$ 
in $O(n^2 h)$ time and $O(n^2)$ space 
so that the query $\Match(X_i, X_j, k)$ is  
answered in $O(\log N)$ time.
\end{lemma}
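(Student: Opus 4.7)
The plan is to reduce each $\Match$ query to a single arithmetic-progression lookup in the AP-table of $\mathcal{S}$ against itself, after first using random access to navigate to the relevant node in the derivation tree. The key observation is that any occurrence of $X_j$ at position $k$ in $X_i$ corresponds uniquely to the lowest common ancestor $z$ of leaves $k$ and $k+|X_j|-1$ in $T_i$: if $z$ is labeled by $X_m$ starting at position $s$ in $X_i$, then $X_j$ occurs at $k$ in $X_i$ if and only if $(k-s+1) \in \kushiOcc(X_m, X_j)$.

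For the pre-processing I would first apply Lemma~\ref{lem:SLP_matching} to $\mathcal{S}$ paired with itself, producing the $n\times n$ AP-table that stores $\kushiOcc(X_m, X_j)$ as a single arithmetic progression for every pair $(X_m, X_j)$; this costs $O(n^2 h)$ time and $O(n^2)$ space. Next, for each variable $X_i$ I would run the random-access pre-processing of Lemma~\ref{lem:random_access} treating $X_i$ as the root of its own SLP (the reachable sub-grammar has size $O(n)$ and shares productions with $\mathcal{S}$). This costs $O(n)$ time and space per variable, hence $O(n^2)$ in total, and is dominated by the AP-table.

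To answer a query $\Match(X_i, X_j, k)$, I would first rule out the degenerate cases: if $k + |X_j| - 1 > |X_i|$ the answer is $\mathrm{false}$, and if $|X_j| = 1$ (so $X_j$ is a terminal) I would simply fetch $X_i[k]$ via the first query of Lemma~\ref{lem:random_access} and compare, in $O(\log N)$ time. Otherwise I would invoke Lemma~\ref{lem:random_access} on $T_i$ with the interval $[k, k+|X_j|-1]$ to obtain, in $O(\log N)$ time, the crossing node $z$, its label $X_m$, and its starting position $s$ in $T_i$. A constant-time lookup in the AP-table retrieves $\kushiOcc(X_m, X_j)$, and a constant-time membership test for $k-s+1$ in that arithmetic progression yields the answer.

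The correctness amounts to checking the bijection between occurrences of $X_j$ at $k$ in $X_i$ and entries in $\kushiOcc(X_m, X_j)$ at the appropriate offset, which is direct from the definition of $\kushiOcc$. The only subtlety I would take care over is the terminal case and the off-by-one between the local coordinate $k-s+1$ inside $X_m$ and the global coordinate $k$ inside $X_i$; once those are pinned down the bounds follow immediately from Lemmas~\ref{lem:random_access} and~\ref{lem:SLP_matching}.
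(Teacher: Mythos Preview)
Your proposal is correct and follows essentially the same approach as the paper: pre-process with Lemma~\ref{lem:random_access} for every variable plus the AP-table via Lemma~\ref{lem:SLP_matching}, then answer each query by locating the crossing node in $T_i$ and testing membership in the stored arithmetic progression. Your handling of the terminal and out-of-range edge cases is slightly more explicit than the paper's, but the argument is otherwise identical.
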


\begin{proof}
We apply Lemma~\ref{lem:random_access} to \emph{every} variable $X_i$ of $\mathcal{S}$,
so that the queries of Lemma~\ref{lem:random_access} is answered in 
$O(\log N)$ time on the derivation tree $T_i$ of each variable $X_i$ of $\mathcal{S}$.
Since there are $n$ variables in $\mathcal{S}$, this takes a total of 
$O(n^2)$ time and space.
We also apply Lemma~\ref{lem:SLP_matching} to $\mathcal{S}$,
which takes $O(n^2 h)$ time and $O(n^2)$ space.
Hence the pre-processing takes a total of $O(n^2 h)$ time and $O(n^2)$ space.

To answer the query $\Match(X_i, X_j, k)$, 
we first find the node of $T_{i}$ the interval $[k, k+|X_j|-1]$ crosses,
its label $X_q$, and its position $r$ in $T_{i}$.
This takes $O(\log N)$ time using Lemma~\ref{lem:random_access}.
Then we check in $O(1)$ time if $(k-r) \in \OccKushi(X_q, X_j)$ or not, 
using the arithmetic progression stored in the AP-table.
Thus the query is answered in $O(\log N)$ time.
\qed
\end{proof}

The following function will also be used in our algorithm:
Let $\FirstMismatch$ be a function such that 
\[
  \FirstMismatch(X_i, X_j, k) = 
   \begin{cases}
    |\LCP(X_i[k..|X_i|], X_j)| & \mathrm{if} \ |X_i|-k+1 \leq |X_j|, \\
    \mathrm{undefined} & \mathrm{otherwise.}
   \end{cases}
\]

Using Lemma~\ref{lem:match} we can establish the following lemma.
See Appendix B for a full proof.
\begin{lemma} \label{lem:fm}
We can pre-process a given SLP $\mathcal{S}$ of size $n$ and height $h$
in $O(n^2 h)$ time and $O(n^2)$ space 
so that the query $\FirstMismatch(X_i, X_j, k)$ is
answered in $O(h \log N)$ time.
\end{lemma}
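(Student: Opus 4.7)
The plan is to answer $\FirstMismatch(X_i, X_j, k)$ by a top-down traversal of the derivation tree $T_j$ of $X_j$, using $\Match$ queries to skip over subtrees of $T_j$ that match the corresponding region of $X_i$. The preprocessing is exactly the preprocessing of Lemma~\ref{lem:match}, which costs $O(n^2h)$ time and $O(n^2)$ space, plus the random-access structure of Lemma~\ref{lem:random_access} applied to $\mathcal{S}$, so no additional preprocessing is needed.

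To answer a query, I would maintain a ``current node'' in $T_j$ labeled by some variable $X_q$ and a ``current position'' $k'$ in $X_i$, starting with $X_q = X_j$ and $k' = k$, together with an accumulator that records the prefix length already verified to match. If $X_q \rightarrow X_\ell X_r$, set $L = |X_i| - k' + 1$ and split into three cases: (i) if $L \leq |X_\ell|$, recurse into $X_\ell$ keeping $k'$ unchanged, since either the mismatch or the end of $X_i$ must lie inside $X_\ell$; (ii) otherwise call $\Match(X_i, X_\ell, k')$; if it returns true, add $|X_\ell|$ to the accumulator and recurse into $X_r$ with new position $k' + |X_\ell|$; (iii) if it returns false, recurse into $X_\ell$ with $k'$ unchanged. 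When the current node is a terminal $X_q \rightarrow a$, fetch $X_i[k']$ via Lemma~\ref{lem:random_access} and return the accumulator plus $0$ or $1$ depending on whether $X_i[k'] \neq a$ or $X_i[k']=a$.

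Each recursive step descends exactly one edge in $T_j$, so the number of steps is bounded by $\height(X_j) \leq h$. Every step does $O(1)$ arithmetic plus a single $\Match$ query or a single random-access query, each costing $O(\log N)$ by Lemmas~\ref{lem:match} and~\ref{lem:random_access}. Hence the total query time is $O(h \log N)$, as required.

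The main subtlety is the boundary case where the remaining suffix of $X_i$ is shorter than the current left child $X_\ell$: without the explicit branch on $L \leq |X_\ell|$, the precondition of $\Match$ (namely that $X_\ell$ fits entirely inside $X_i$ starting at $k'$) can fail, and we would also risk passing a position beyond $|X_i|$ into a later query. The hypothesis $|X_i|-k+1 \leq |X_j|$ in the lemma statement is precisely what guarantees that whenever this boundary case is entered, the descent still proceeds into a proper subtree of $T_j$, so the recursion terminates at a leaf inside $T_j$ and returns the correct LCP (possibly equal to $|X_i|-k+1$ if no mismatch occurs before $X_i$ is exhausted).
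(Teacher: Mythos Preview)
Your proposal is correct and follows essentially the same approach as the paper: a top-down recursion on the derivation tree of $X_j$, using one $\Match$ query per level to decide whether to descend left or right, giving $O(h)$ queries of cost $O(\log N)$ each after the $O(n^2h)$-time $O(n^2)$-space preprocessing of Lemma~\ref{lem:match}. The only difference is that you add the explicit branch~(i) for the case $|X_i|-k'+1 \leq |X_\ell|$; the paper's proof omits this case and simply relies on $\Match$ returning false there, but your treatment is cleaner and does not change the argument or the complexity.
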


\if0
\begin{proof}
Assume $|X_i|-k+1 \leq |X_j|$ holds.

If $X_j \rightarrow a$ with $a \in \Sigma$,
then 
\[
 \FirstMismatch(X_i, X_j, k) = 
  \begin{cases}
   1 & \mathrm{if} \ \Match(X_i, X_j, k) = \mathrm{true}, \\ 
   0 & \mathrm{if} \ \Match(X_i, X_j, k) = \mathrm{false}. \\ 
  \end{cases}
\]

If $X_j \rightarrow X_{\ell(j)} X_{r(j)}$,
then we can recursively compute $\FirstMismatch(X_i, X_j, k)$ as follows:
\begin{eqnarray}
\lefteqn{\FirstMismatch(X_i, X_j, k)} \nonumber \\
& = &
   \begin{cases}
    \FirstMismatch(X_i, X_{r(j)}, k + |X_{\ell}|) & \mathrm{if} \ \Match(X_i, X_{\ell(j)}, k) = \mathrm{true}, \\
    \FirstMismatch(X_i, X_{\ell(j)}, k) & \mathrm{if } \ \Match(X_i, X_{\ell(j)}, k) = \mathrm{false}.
   \end{cases} \label{eqn:fm}
\end{eqnarray}

We apply Lemma~\ref{lem:match} to $\mathcal{S}$, 
pre-processing SLP $\mathcal{S}$ in $O(n^2 h)$ time and $O(n^2)$ space,
so that query $\Match(X_i, X_{j^\prime}, k^\prime)$ is answered in $O(\log N)$ time
for any variable $X_{j^\prime}$ and integer $k^\prime$.
Note that in either case of Equation~\ref{eqn:fm},
the height of the second variable decreases by 1.
Hence we can compute $\FirstMismatch(X_i, X_j, k)$ in $O(h \log N)$ time,
after the $O(n^2 h)$-time $O(n^2)$-space pre-processing.
\qed
\end{proof}
\fi

We are ready to prove Lemma~\ref{lem:LCE}:
\begin{proof}
Consider to compute $\LCE(X_i, k_1, k_2)$.
Without loss of generality, assume $k_1 \leq k_2$.
Let $z$ be the lca of the $k_1$-th and $(k_2 - k_1 + |X_i|)$-th leaves
of the derivation tree $T_i$.
Let $P_\ell$ be the path from $z$ to the $k_1$-th leaf
of the derivation tree $T_i$, and 
let $L$ be the list of the right child of the nodes in $P_\ell$
sorted in increasing order of their position in $T_i$.
The number of nodes in $L$ is at most $\height(X_i) \leq h$,
and $L$ can be computed in $O(\height(X_i)) = O(h)$ time.
Let $P_r$ be the path from $z$ to the $(k_2 - k_1 + |X_i|)$-th leaf 
of the derivation tree $T_i$,
and let $R$ be the list of the left child of the nodes in $P_r$
sorted in increasing order of their position in $T_i$.
$R$ can be computed in $O(h)$ time as well.
Let $U = L \cup R = \{X_{u(1)}, X_{u(2)}, \ldots, X_{u(m)}\}$ 
be the list obtained by concatenating $L$ and $R$.
For each $X_{u(p)}$ in increasing order of $p = 1, 2, \ldots, m$,
we perform query $\Match(X_i, X_{u(p)}, k_1 + \sum_{q = 1}^{p-1}|X_{u(q)}|)$ 
until either finding the first variable $X_{u(p^\prime)}$ for which the query returns false
(see also Fig.~\ref{fig:lce} in Appendix C),
or all the queries for $p = 1, \ldots, m$ have returned true.
In the latter case, clearly $\LCE(X_i, k_1, k_2) = |X_i| - k_1 + 1$.
In the former case, the first mismatch occurs between $X_i$ and $X_{u(p^\prime)}$, 
and hence 
$\LCE(X_i, k_1, k_2) = \sum_{q^\prime = 1}^{p^\prime-1}|X_{u(q^\prime)}| + \FirstMismatch(X_i, X_{u(p^\prime)}, k_1 + \sum_{q^\prime = 1}^{p^\prime-1} |X_{u(q^\prime)}|)$.

Since $U$ contains at most $2 \cdot \height(X_i)$ variables,
we perform  $O(h)$ $\Match$ queries.
We perform at most one $\FirstMismatch$ query.
Thus, using Lemmas~\ref{lem:match} and~\ref{lem:fm},
we can compute $\LCE(X_i, k_1, k_2)$ in $O(h \log N)$ time
after $O(n^2 h)$-time $O(n^2)$-space pre-processing.
\qed
\end{proof}

\if0
\begin{figure}[tb]
 \centerline{\includegraphics[width=0.6\textwidth]{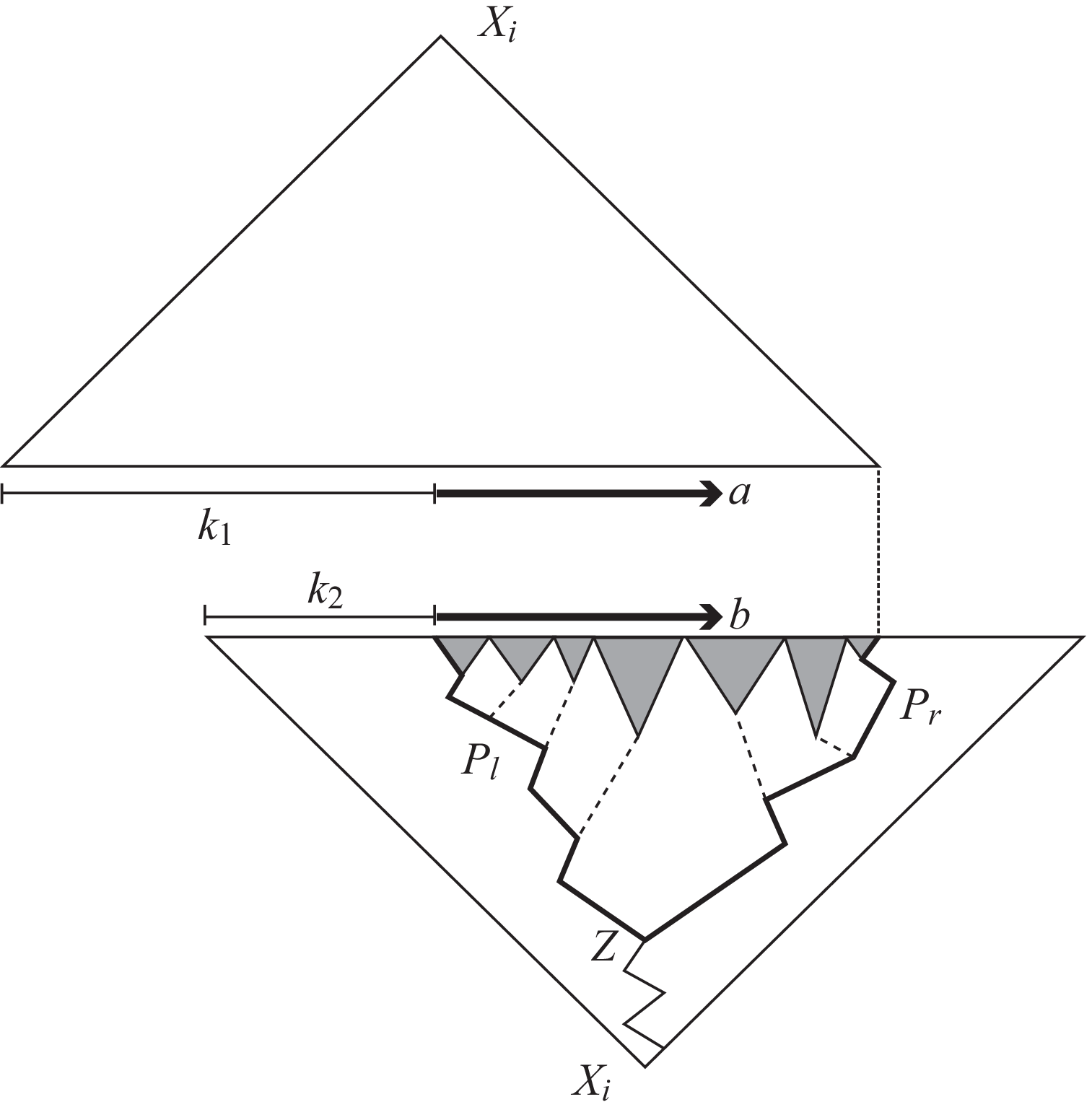}}
 \caption{
  Illustration for computing $\LCE(X_i, k_1, k_2)$.
  The roots of the gray subtrees are labeled by the variables in $U$.
  We find the first variable $X_{u(p^\prime)}$ in the list $U$ with which
  the $\Match$ query returns false.
  We then perform the $\FirstMismatch$ query for  
  $X_i$ and $X_{u(p^\prime)}$ using the appropriate offset.
 }
 \label{fig:lce}
\end{figure}
\fi

We can use Lemma~\ref{lem:LCE} to also compute 
``left-left'',
``left-right'',
and ``right-left'' longest common extensions on the uncompressed string $s = \derive(\mathcal{S})$:
We can compute in $O(n)$ time an SLP $\mathcal{S}^R$
of size $n$ which represents the reversed string $s^R$~\cite{matsubara_tcs2009}.
We then construct a new SLP $\mathcal{S}^\prime$ of size $2n$ and height $h+1$
by concatenating the last variables of $\mathcal{S}$ and $\mathcal{S}^R$,
and apply Lemma~\ref{lem:LCE} to $\mathcal{S}^\prime$.

\subsection{Approximate doubling} \label{subsec:approximate_doubling}

Here we show how to reduce the number of AP-table computation required 
in Step 1 of the basic algorithm, from $O(\log N)$ to $O(1)$ times per variable.

Consider any production $X_i \rightarrow X_\ell X_r$. 
If we build a new SLP which contains variables
that derive the prefixes of length $2^t$ of $X_r$ for each $0 \leq t < \log |X_r|$,
we can obtain the AP-tables for $X_i$ and all prefix seeds of $X_r$ by computing the AP-table for $X_i$ and the new SLP.
Unfortunately, however, the size of such a new SLP can be as large as $O(n \log N)$.
Here we notice that the lengths of the seeds do not have to be exactly doublings,
i.e., the basic algorithm of Section~\ref{subsec:basic_algorithm} works fine
as long as the following properties are fulfilled:
(a) the ratio of the lengths for each pair of consecutive seeds is constant;
(b) the whole string is covered by the $O(\log N)$ seeds
\footnote{A minor modification is that we conduct local search for a seed $p$ at Step 1
with the range satisfying $2|p| \leq d < 2|q|$, where $q$ is the next longer seed of $p$.}.
We show in the next lemma that we can build an approximate doubling SLP of size $O(n)$.

\begin{lemma}\label{lem:approx_doubling}
Let $\mathcal{S} = \{X_i \rightarrow \mathit{expr}_i\}_{i=1}^{n}$ 
be an SLP that derives a string $s$.
We can build in $O(n)$ time a new SLP $\mathcal{S'} = \{Y_i \rightarrow \mathit{expr'}_i\}_{i=1}^{n'}$
with $n' = O(n)$ and $\height(\mathcal{S'}) = O(\height(\mathcal{S}))$,
which derives $s$ and
contains $O(\log N)$ variables $Y_{a_1}, Y_{a_2}, \dots, Y_{a_k}$ satisfying the following conditions:
\begin{itemize}
\item For any $1 \leq j \leq k$, $Y_{a_j}$ derives a prefix of $s$, $|Y_{a_1}| = 1$ and $|Y_{a_k}| = |s|$.
\item For any $1 \leq j < k$, $|Y_{a_{j}}| < |Y_{a_{j+1}}| \leq 2|Y_{a_{j}}|$.
\end{itemize}
\end{lemma}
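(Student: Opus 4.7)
I would prove this by an explicit bottom-up augmentation of $\mathcal{S}$. The idea is to attach to every variable a ``canonical half-prefix'' variable, and then read off the $Y_{a_j}$'s by iteratively taking half-prefixes starting from $X_n$.

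\textbf{Construction of the half-prefix.} For each production $X \to X_\ell X_r$ of $\mathcal{S}$, I define a variable $H(X)$ deriving a prefix of $X$ whose length lies in $[\lceil |X|/2 \rceil,\, |X|-1]$:
\begin{itemize}
\item If $|X_\ell| \geq |X_r|$, then $|X_\ell| \geq |X|/2$, so I simply set $H(X) := X_\ell$; no new variable is introduced.
\item If $|X_\ell| < |X_r|$, I introduce a fresh variable $H(X)$ with production $H(X) \to X_\ell \cdot H(X_r)$. Since $|H(X_r)| \geq |X_r|/2$, we get $|H(X)| = |X_\ell| + |H(X_r)| \geq |X|/2$, and $|H(X)| < |X|$ since $|H(X_r)| < |X_r|$.
\end{itemize}
Processing the DAG of $\mathcal{S}$ in bottom-up order, this creates at most one new variable per production, hence at most $n$ new variables in $O(n)$ total time.

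\textbf{Extracting the doubling chain.} Starting from $C_0 := X_n$, I iteratively set $C_{i+1} := H(C_i)$ until $|C_i|=1$. Because $|C_{i+1}| \geq \lceil |C_i|/2 \rceil$ and $|C_{i+1}| < |C_i|$, the chain strictly decreases and halves at every step, so its length is $k = O(\log N)$. Each $C_i$ is a prefix of $s$ by construction. Setting $Y_{a_{k+1-i}} := C_i$ reverses the chain into the desired increasing sequence satisfying $|Y_{a_1}|=1$, $|Y_{a_k}|=|s|$, and $|Y_{a_{j+1}}| \leq 2|Y_{a_j}|$.

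\textbf{Height.} A structural induction on the definition of $H$ gives $\height(H(X)) \leq \height(X)$: in the first case directly, and in the second $\height(H(X)) = 1 + \max(\height(X_\ell), \height(H(X_r))) \leq 1 + \max(\height(X_\ell), \height(X_r)) = \height(X)$. Hence $\height(\mathcal{S}') = O(\height(\mathcal{S}))$.

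\textbf{Main obstacle.} The delicate point is that iterating $H$ through the chain may hit a newly created variable $V = X_\ell \cdot H(X_r)$, at which $H(V)$ is not yet defined by the bottom-up pass. Extending the same rule to $V$ (comparing $|X_\ell|$ with $|H(X_r)|$) can trigger a cascade of further $H$-definitions along the right spine of $X_r$; a naive bookkeeping would give $O(h \log N)$ new variables rather than $O(n)$. The hard part of the proof is therefore the amortized counting argument: each cascade must be charged to a distinct descent through the right-spine structure of $\mathcal{S}$ so that every production of $\mathcal{S}$ is ``touched'' only a constant number of times, keeping the total at $O(n)$.
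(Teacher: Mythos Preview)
Your approach has a genuine gap, and in fact two.

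First, the claim that the chain $C_0,C_1,\ldots$ has length $O(\log N)$ is backwards. From $|C_{i+1}|\geq \lceil |C_i|/2\rceil$ you only know that each step shrinks by \emph{at most} a factor~$2$; together with $|C_{i+1}|<|C_i|$ this allows the chain to shrink by exactly~$1$ at every step. Concretely, for the right-leaning SLP $X_1\to a$, $X_{i+1}\to X_1 X_i$ you get $|H(X_i)|=i-1$, so the chain has length $N-1$, not $O(\log N)$. The lemma's requirement that there be only $O(\log N)$ prefix variables is therefore not met.

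Second, and more seriously, the ``main obstacle'' you identify cannot be discharged by amortization in your construction. Take $X_1\to a$, $X_{i+1}\to X_iX_i$ for $i<m$, and $X_{i+1}\to X_1X_i$ for $m\le i<n$ (so $N\approx 2^{m}$ with $m=\Theta(n)$). Computing $H^{(2)}(X_n)$ already forces you to materialize $H^{(2)}(X_{n-1}),H^{(2)}(X_{n-2}),\ldots,H^{(2)}(X_{m+1})$ as right children, i.e.\ $\Theta(n)$ fresh variables; the same happens for every further iterate you keep, yielding $\Theta(n\log N)=\Theta(n^2)$ new variables. The cascade is not charged to distinct productions of $\mathcal S$: the same production $X_{m+j}\to X_1X_{m+j-1}$ is re-touched once per level.

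The paper avoids both issues by a different construction. It does \emph{not} iterate a half-prefix operator. Instead it walks the derivation tree of $X_n$ once, building each prefix variable as $Y_{a_{i+1}}\to Y_{a_i}\,Y_{b_i}$, where $Y_{b_i}$ encodes the next ``segment'' of leaves. The walk toward position $2|Y_{a_i}|$ is cut short whenever it would re-enter a variable label already visited (Condition~(2)); this both guarantees the factor-$2$ bound and ensures every edge of the derivation DAG is traversed $O(1)$ times, giving the $O(n)$ size bound directly. The key idea you are missing is precisely this early-stopping rule on repeated labels, which has no analogue in the bottom-up $H(\cdot)$ scheme.
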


\begin{proof}
First, we copy the productions of $\mathcal{S}$ into $\mathcal{S'}$.
Next we add productions needed for creating prefix variables $Y_{a_1}, Y_{a_2}, \dots, Y_{a_k}$ in increasing order.
We consider separating the derivation tree $T_n$ of $X_n$ into segments by a sequence of nodes $v_1, v_2, \dots, v_{k}$ such that
the $i$-th segment enclosed by the path from $v_i$ to $v_{i+1}$ represents the suffix of $Y_{a_{i+1}}$ of length $|Y_{a_{i+1}}| - |Y_{a_{i}}|$,
namely, $Y_{a_{i+1}} \rightarrow Y_{a_{i}} Y_{b_{i}}$ where $Y_{b_{i}}$ is a variable for the $i$-th segment.
Each node $v_i$ is called an l-node (resp. r-node) if the node belongs to the left (resp. right) segment of the node.

We start from $v_1$ which is the leftmost node that derives $s[1]$.
Suppose we have built prefix variables up to $Y_{a_{i}}$ and now creating $Y_{a_{i+1}}$.
At this moment we are at $v_{i}$.
We move up to the node $u_{i}$ such that $u_{i}$ is the deepest node on the path from the root to $v_{i}$ which contains position $2|Y_{a_{i}}|$,
and move down from $u_{i}$ towards position $2|Y_{a_{i}}|$.
The traversal ends when we meet a node $v_{i+1}$ which satisfies one of the following conditions;
(1) the rightmost position of $v_{i+1}$ is $2|Y_{a_{i}}|$,
(2) $v_{i+1}$ is labeled with $X_j$, and we have traversed another node labeled with $X_j$ before.
\begin{itemize}
\item If Condition (1) holds, $v_{i+1}$ is set to be an l-node.
      It is clear that the length of the $i$-th segment is exactly $|Y_{a_{i}}|$ and $|Y_{a_{i+1}}| = 2|Y_{a_{i}}|$.
\item If Condition (1) does not hold but Condition (2) holds, $v_{i+1}$ is set to be an r-node.
      Since $v_{i+1}$ contains position $2|Y_{a_{i}}|$, the length of the $i$-th segment is less than $|Y_{a_{i}}|$ and $|Y_{a_{i+1}}| < 2|Y_{a_{i}}|$.
      We remark that since $X_j$ appears in $Y_{a_{i+1}}$, then $|Y_{a_{i+1}}| + |X_j| \leq 2|Y_{a_{i+1}}|$, 
      and therefore, we never move down $v_{i+1}$ for the segments to follow.
\end{itemize}
We iterate the above procedures until we obtain a prefix variable $Y_{a_{k-1}}$ that satisfies $|X_n| \leq 2|Y_{a_{k-1}}|$.
We let $u_{k}$ be the deepest node on the path from the root to $v_{k-1}$ which contains position $|s|$, 
and let $v_{k}$ be the right child of $u_{k}$.
Since $|Y_{a_{i}}| < 2|Y_{a_{i+2}}|$ for any $1 \leq i < k$, $k = O(\log N)$ holds.

We note that the $i$-th segment can be represented by the concatenation of ``inner'' nodes attached to the path from $v_{i}$ to $v_{i+1}$,
and hence, the number of new variables needed for representing the segment is bounded by the number of such nodes.
Consider all the edges we have traversed in the derivation tree $T_n$ of $X_n$.
Each edge contributes to at most one new variable for some segment (see also Fig.~\ref{fig:approx_doubling} in Appendix C).
Since each variable $X_j$ is used constant times for moving down due to Condition (2),
the number of the traversed edges as well as $n'$ is $O(n)$.
Also, it is easy to make the height of $Y_{b_{i}}$ be $O(\height(\mathcal{S}))$ for any $1 \leq i < k$.
Thus $O(\height(\mathcal{S'})) = O(\log N + \height(\mathcal{S})) = O(\height(\mathcal{S}))$.
\qed
\end{proof}

\subsection{Improved algorithm}

Using Lemmas~\ref{lem:LCE} and~\ref{lem:approx_doubling}, we get the following theorem.
\begin{theorem}
Given an SLP $\mathcal{S}$ of size $n$ and height $h$ that describes string $s$ of length $N$,
an $O(n \log N)$-size compact representation of all runs in $s$ can be computed in $O(n^3h)$ time and $O(n^2)$ working space.
\end{theorem}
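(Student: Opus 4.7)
The plan is to assemble the divide-and-conquer scheme of Section~\ref{subsec:basic_algorithm} with the two improvements just introduced. We begin with a single global pre-processing phase. Invoking Lemma~\ref{lem:LCE} on the SLP obtained by concatenating $\mathcal{S}$ with an SLP for $\rev{s}$ (as described after Lemma~\ref{lem:LCE}) takes $O(n^2 h)$ time and $O(n^2)$ space and supports ``right-right'', ``left-left'', ``left-right'' and ``right-left'' $\LCE$ queries on $s$ in $O(h \log N)$ time each.

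Next, we iterate over every internal variable $X_i \rightarrow X_{\ell(i)} X_{r(i)}$. We apply Lemma~\ref{lem:approx_doubling} to $X_{r(i)}$ and to the reverse of $X_{\ell(i)}$, obtaining in $O(n)$ time an auxiliary SLP $\mathcal{S}'_i$ of size $O(n)$ and height $O(h)$ whose $O(\log N)$ designated variables are exactly the candidate seeds that touch the boundary of $X_i$, with lengths forming a geometric progression of ratio at most~$2$. A single invocation of Lemma~\ref{lem:SLP_matching} then computes the AP-table between $\mathcal{S}$ and $\mathcal{S}'_i$ in $O(n^2 h)$ time and $O(n^2)$ space, which suffices to enable every local search for $X_i$. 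For each of the $O(\log N)$ seeds $p$ we run Steps~1--2 of the basic algorithm: by Lemma~\ref{lem:local_search}, the occurrences of $p$ in the appropriate local window (widened to $2|p| \le d < 2|q|$ with $q$ the next longer seed, as described in the footnote of Section~\ref{subsec:approximate_doubling}) are returned as $O(1)$ arithmetic progressions in $O(h)$ time, and extending each progression into a maximal run consumes a constant number of $\LCE$ queries at $O(h \log N)$ apiece.

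The cost per variable adds up to $O(n^2 h) + O(\log N) \cdot O(h \log N) = O(n^2 h)$ since $n \ge \log N$; discarding the AP-table between variables keeps the working space at $O(n^2)$. Over all $n$ variables, together with the two auxiliary variables $X_{n+1}, X_{n+2}$ introduced at the beginning of Section~\ref{sec:repetitions} to catch prefix and suffix runs of $s$, the total running time is $O(n^3 h)$. Each variable contributes $O(\log N)$ quintuplets $\quintuplet{\delta_1}{\delta_2}{\delta_3}{c}{k}$, so the compact output has size $O(n \log N)$.

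The main obstacle, in my view, is verifying that the relaxation from exact doublings to approximate doublings does not break the four-case analysis of Section~\ref{subsec:basic_algorithm}. One must check that for every square $ww$ positioned relative to the boundary of $X_i$ as in any of the four cases of Fig.~\ref{fig:cases}, the chosen seed $p$ is still entirely contained in $w$ and touches the boundary, and that the enlarged local-search window still fits within the $\alpha|p|$-bound of Lemma~\ref{lem:local_search} for some constant $\alpha$. Both properties follow from the inequality $|Y_{a_{j+1}}| \le 2|Y_{a_j}|$ of Lemma~\ref{lem:approx_doubling}, which caps the window width by a constant multiple of $|p|$, so at most $O(1)$ arithmetic progressions are produced and the correctness of the basic algorithm transfers unchanged.
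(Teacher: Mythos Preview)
Your proposal is correct and follows essentially the same approach as the paper: a single global $\LCE$ pre-processing via Lemma~\ref{lem:LCE}, then per variable build the approximately doubling seed SLPs via Lemma~\ref{lem:approx_doubling}, compute their AP-tables against $\mathcal{S}$ via Lemma~\ref{lem:SLP_matching}, and process each of the $O(\log N)$ seeds with $O(1)$ local searches and $\LCE$ queries. Your extra paragraph verifying that the approximate-doubling relaxation preserves the four-case analysis is a welcome sanity check that the paper leaves implicit (beyond the footnote you cite).
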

\begin{proof}
Using Lemma~\ref{lem:LCE}, we first pre-process $\mathcal{S}$ in $O(n^2h)$ time 
so that any ``right-right'' or ``left-left'' $\LCE$ query can be answered in $O(h \log N)$ time.
For each variable $X_i \rightarrow X_\ell X_r$, 
using Lemma~\ref{lem:approx_doubling}, 
we build temporal SLPs $\mathcal{T}$ and $\mathcal{T}'$ which have respectively 
approximately doubling suffix variables of $X_\ell$ and prefix variables of $X_r$, 
and compute two AP-tables for $\mathcal{S}$ and each of them in $O(n^2h)$ time.
For each of the $O(\log N)$ prefix/suffix variables, 
we use it as a seed and find all corresponding runs by using $\LS$ and $\LCE$ queries 
constant times.
Hence the time complexity is $O(n^2h + n(n^2h + (h + h \log N) \log N)) = O(n^3h)$.
The space requirement is $O(n^2)$, the same as the basic algorithm.
\qed
\end{proof}

\if0
\fi

\section{Finding $g$-gapped palindromes}

A similar strategy to finding runs on SLPs can be used for 
computing a compact representation of the set $\gpals(s)$ of 
$g$-gapped palindromes from an SLP $\mathcal{S}$ that describes string $s$.
As in the case of runs, 
we add two auxiliary variables $X_{n+1} \rightarrow X_{\$} X_n$ and $X_{n+2} \rightarrow X_{n+1} X_{\$^{\prime}}$.
For each production $X_i \rightarrow X_{\ell} X_{r}$ with $i \leq n+2$,
we consider the set $\kushigpals(X_i)$ of $g$-gapped palindromes which touch or cross the boundary of 
$X_i$ and are completed in $X_i$,
i.e., those that are not prefixes nor suffixes of $X_i$.
Formally, 
\[
  \kushigpals(X_i) = \{\gpal{b}{e} \in \gpals(X_i) \mid 1 \leq b-1 \leq |X_{\ell}| < e+1 \leq |X_i| \}.
\]

Each $g$-gapped palindrome in $X_i$ can be divided into three groups (see also Fig.~\ref{fig:cases_gpal}); 
(1) its right arm crosses or touches with its right end the boundary of $X_i$,
(2) its left arm crosses or touches with its left end the boundary of $X_i$,
(3) the others.

\begin{figure*}[t]
  \begin{center}
    \includegraphics[scale=0.70]{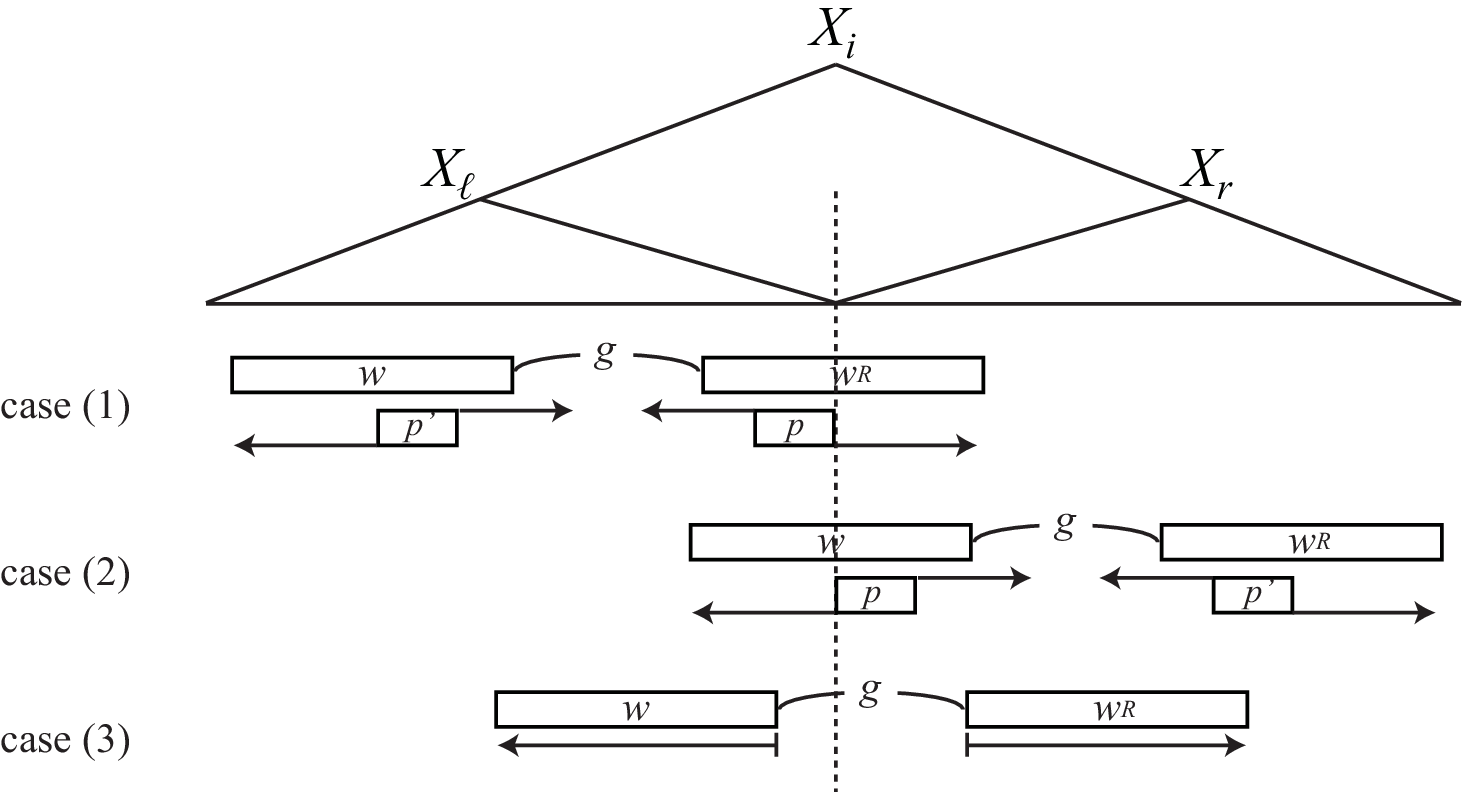}
  \end{center}
  \caption{
    Three groups of $g$-gapped palindromes to be found in $X_i$.
  }
  \label{fig:cases_gpal}
\end{figure*}

For Case (3), for every $|X_{\ell}|-g+1 \leq j < |X_{\ell}|$ we check if $\LCP(\rev{X_i[1..j]}, X_i[j+g+1..|X_i|]) > 0$ or not.
From Lemma~\ref{lem:LCE}, it can be done in $O(g h \log N)$ time for any variable by using ``left-right'' $\LCE$ (excluding pre-processing time for $\LCE$).
Hence we can compute all such $g$-gapped palindromes for all productions in $O(n^2h + g n h \log N)$ time,
and clearly they can be stored in $O(ng)$ space.

For Case (1), let $w_\ell$ be the prefix of the right arm 
which is also a suffix of $\derive(X_\ell)$.
We take approximately doubling suffixes of $X_\ell$ as seeds.
Let $p$ be the longest seed that is contained in $w_\ell$.
We can find $g$-gapped palindromes by the following steps:
\begin{description}
  \item[Step 1:] Conduct local search of $p' = \rev{p}$ in an ``appropriate range'' of $X_{i}$
        and find it in the left arm of palindrome. \label{enum:matching_gpal}
  \item[Step 2:] Compute ``right-left'' $\LCE$ of $p'$ and $p$, then check that the gap can be $g$.
        The outward maximal extension can be obtained by computing ``left-right'' $\LCE$ queries 
on the occurrences of $p'$ and $p$. \label{enum:lcp_gpal}
\end{description}
As in the case of runs, 
for each seed, the length of the range where the local search is performed
in Step 1 is only $O(|p|)$.
Hence, the occurrences of $p'$ can be represented by a constant number of arithmetic progressions.
Also, we can obtain $O(1)$-space representation of $g$-gapped palindromes 
for each arithmetic progression representing overlapping occurrences of $p'$, 
by using a constant number of $\LCE$ queries.
Therefore, by processing $O(\log N)$ seeds for every variable $X_i$,
we can compute in $O(n^2h + n(n^2h + (h + h \log N) \log N)) = O(n^3h)$ time 
an $O(n \log N)$-size representation of all $g$-gapped palindromes for Case (1) in $s$.

In a symmetric way of Case (1), we can find all $g$-gapped palindromes for Case (2).
Putting all together, we get the following theorem.

\if0
\begin{lemma}\label{lem:SLP_gpal}
Let $X_i \rightarrow X_\ell X_r$ be a production of an SLP of size $n$ 
and $p$ be the suffix of length $2^{t-1}$ of $\derive(X_\ell)$.
Let $\{a_0, a_1, \dots, a_k\}$ be consecutive occurrences of $p$ in $\derive(X_i)$ 
forming a single arithmetic progression with common difference $c \leq |p|$,
found in Step~\ref{enum:matching_gpal}.
We can detect the occurrences that are seeds of $g$-gapped palindromes in $\Oh{n^3}$ times,
and the $g$-gapped palindromes to be computed can be represented in constant space.
\end{lemma}

\begin{proof}
We apply Lemma~\ref{lem:handle_ap} with $s = \derive(X_i)$, $x = \rev{\derive(X_\ell)}$ and $x' = \derive(X_r)[1+|p|:|\derive(X_\ell)|)$.
First we compute $\overr{\alpha} = \LCP(pz_0, p[c+1:|p|]z_0) + c - |p|$, $\overr{\beta} = \LCP(px, p[c+1:|p|]x) + c - |p|$, 
$\overl{\alpha} = \LCS(z'_0p, z'_0p[1:|p| - c]) + c - |p|$ and $\overl{\beta} = \LCS(x'p, x'p[1:|p| - c]) + c - |p|$ by using $\LCP$ and $\LCS$ four times.
\end{proof}

From Lemma=\ref{lem:SLP_gpal} 
\fi

\begin{theorem}\label{theorem:compute_gap_pals}
Given an SLP of size $n$ and height $h$ that describes string $s$ of length $N$, and non-negative integer $g$,
an $O(n \log N + ng)$-size compact representation of all $g$-gapped palindromes in $s$ can be computed in $O(n^3h + gnh\log N)$ time and $O(n^2)$ working space.
\end{theorem}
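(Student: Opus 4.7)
The plan is to combine the machinery developed for runs with the three-way case split sketched in Figure~\ref{fig:cases_gpal}. As in Section~\ref{sec:repetitions}, I augment $\mathcal{S}$ with $X_{n+1}$ and $X_{n+2}$ so that prefix- and suffix-palindromes of $s$ are captured as boundary-crossing palindromes at those two variables; then computing $\kushigpals(X_i)$ for every $i \leq n+2$ yields $\gpals(s)$ exactly once. The one-time pre-processing consists of (i) applying Lemma~\ref{lem:LCE} to $\mathcal{S}$ and to the SLP obtained by concatenating $\mathcal{S}$ with $\mathcal{S}^R$, giving $O(h \log N)$-time LCE queries of all four orientations after $O(n^2 h)$ time and $O(n^2)$ space, and (ii) preparing to invoke Lemmas~\ref{lem:SLP_matching} and~\ref{lem:approx_doubling} on demand.

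For each production $X_i \rightarrow X_\ell X_r$ I handle Case~(1) by using Lemma~\ref{lem:approx_doubling} to build a secondary SLP of size $O(n)$ whose $O(\log N)$ distinguished variables are approximately doubling suffixes of $X_\ell$, compute the AP-table between $\mathcal{S}$ and this secondary SLP in $O(n^2 h)$ time via Lemma~\ref{lem:SLP_matching}, and then, for each seed $p$, locate the candidate reversed copies $p' = \rev{p}$ lying in the appropriate $O(|p|)$-wide window of $X_i$ using the local-search query $\LS$ of Lemma~\ref{lem:local_search} in $O(h)$ time. By Lemma~\ref{lem:ap} these candidate positions form a constant number of arithmetic progressions, and for each progression a constant number of right-left and left-right LCE queries suffices to both check that the gap equals $g$ and compute the outward maximal extension; the resulting set of maximal $g$-gapped palindromes admits an $O(1)$-space encoding by the same periodicity argument used in Appendix~A for runs. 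Case~(2) is symmetric, handled either by swapping roles with prefix seeds of $X_r$ or by running the same algorithm on $\mathcal{S}^R$.

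Case~(3) covers the $g$-gapped palindromes whose gap straddles the boundary of $X_i$, so that both arms are strictly shorter than $g$ and no seed of the doubling family is available. For each $j$ with $|X_\ell| - g + 1 \leq j < |X_\ell|$ I test whether $\LCP(\rev{X_i[1..j]}, X_i[j+g+1..|X_i|]) > 0$ via one left-right LCE query and, if so, obtain the maximal palindrome centred at the split between positions $j$ and $j+g+1$ from the same query. Each such query costs $O(h \log N)$ after the pre-processing, so Case~(3) contributes $O(g h \log N)$ time and at most $g$ palindromes per variable; over all variables this amounts to $O(g n h \log N)$ time and $O(n g)$ output.

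Summing across the $n+2$ variables, Cases~(1) and~(2) cost $O(n \cdot (n^2 h + (h + h \log N) \log N)) = O(n^3 h)$ and yield $O(n \log N)$ compact entries, while Case~(3) costs $O(g n h \log N)$ and yields $O(n g)$ entries; the AP-tables built per variable are discarded before moving on, so the working space stays at $O(n^2)$. The main obstacle I anticipate is verifying the $O(1)$-space encoding for Cases~(1) and~(2): within a single arithmetic progression of seed-occurrences one must argue, as for runs, that the gap-length constraint ``gap equals $g$'' selects a sub-progression and that arm-length maximality is preserved under the periodicity, so that a constant number of LCE probes at the endpoints of the progression determines every palindrome it produces. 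Once this is in place, collecting the outputs across all variables gives the claimed bounds.
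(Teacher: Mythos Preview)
Your proposal is correct and follows essentially the same route as the paper. Two small slips worth flagging: in Case~(1) the AP-table must be built against the \emph{reversal} of the approximate-doubling SLP (you are locating $p'=\rev{p}$, not $p$), and your characterisation of Case~(3) as ``both arms strictly shorter than $g$'' is false---the arms may be arbitrarily long; what distinguishes Case~(3) is only that the gap straddles the boundary---but neither slip affects the algorithm you actually describe or the bounds you derive.
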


\section{Discussions}

Let $\mathbb{R}$ and $\mathbb{G}$ denote the output compact representations of 
the runs and $g$-gapped palindromes of a given SLP $\mathcal{S}$, respectively,
and let $|\mathbb{R}|$ and $|\mathbb{G}|$ denote their size.
Here we show an application of $\mathbb{R}$ and $\mathbb{G}$;
given any interval $[b, e]$ in $s$, 
we can count the number of runs and gapped palindromes in $s[b..e]$ in $O(n + |\mathbb{R}|)$ 
and $O(n + |\mathbb{G}|)$ time, respectively.
We will describe only the case of runs, 
but a similar technique can be applied to gapped palindromes.
As is described in Section~\ref{sec:lce}, $s[b..e]$ can be represented by 
a sequence $U = (X_{u(1)}, X_{u(2)}, \ldots, X_{u(m)})$ of $O(h)$ variables of $\mathcal{S}$.
Let $\mathcal{T}$ be the SLP obtained by concatenating the variables of $U$.
\sinote*{please check}{%
There are three different types of runs in $\mathbb{R}$:
(1) runs that are completely within the subtree rooted at one of the nodes of $U$;
(2) runs that begin and end inside $[b, e]$ and cross or touch any border between consecutive nodes of $U$;
(3) runs that begin and/or end outside $[b, e]$.
Observe that the runs of types (2) and (3)
cross or touch the boundary of one of the nodes in the path from the root to the $b$-th leaf of the derivation tree $T_\mathcal{S}$, 
or in the path from the root to the $e$-th leaf of $T_\mathcal{S}$.
A run that begins outside $[b, e]$ is counted 
only if the suffix of the run that intersects $[b, e]$ has an exponent of at least 2.
The symmetric variant applies to a run that ends outside $[b, e]$.
Thus, the number of runs of types (2) and (3) can be counted in $O(n + 2|\mathbb{R}|)$ time.
Since we can compute in a total of $O(n)$ time
the number of nodes in the derivation tree of $\mathcal{T}$
that are labeled by $X_i$ for all variables $X_i$,
the number of runs of type (1) for all variables $X_{u(j)}$ 
can be counted in $O(n + |\mathbb{R}|)$ time.
Noticing that runs are compact representation of squares, 
we can also count the number of occurrences of all squares 
in $s[b..e]$ in $O(n + |\mathbb{R}|)$ time
by simple arithmetic operations.
}%

The approximate doubling and $\LCE$ algorithms of 
Section~\ref{sec:repetitions} can be used as basis of other efficient algorithms on SLPs.
For example, using approximate doubling,
we can reduce the number of pairs of variables for which the AP-table has to be computed
in the algorithms of Lifshits~\cite{lifshits07:_proces_compr_texts}, 
which compute compact representations of all periods and covers of a string given as an SLP.
As a result, we improve the time complexities 
from $O(n^2h \log N)$ to $O(n^2h)$ for periods, 
and from $O(n^2h \log^2 N)$ to $O(nh(n+\log^2 N))$ for covers.

\bibliographystyle{splncs03}
\bibliography{ref}

\newpage
\section*{Appendix A: Details of the algorithm to find runs}

In this section, we describe how we process occurrences of $p'$ at Step 2 of the basic algorithm.
To handle occurrences of $p'$ that are represented by an arithmetic progression, 
we make use of its periodicity.

For any string $s$ and positive integer $c \leq |s|$, 
let $\repr{s}{c}$ (resp. $\repl{s}{c}$) denote the length of 
the longest prefix (resp. suffix) of $s$ having period $c$.

\begin{lemma}\label{lem:handle_ap}
Let $s,p \in \Sigma^+$ and 
$\{a_0, a_1, \dots, a_k\}$ be consecutive occurrences of $p$ in $s$ 
that form a single arithmetic progression with common difference $c \leq |p|$.
Let $z_j = s[a_j + |p|..|s|]$ and $z'_j = s[1..a_j - 1]$ for any $0 \leq j \leq k$.
For any non-empty strings $x, x' \in \Sigma^+$,
it holds that 
\begin{eqnarray*}
\LCP(z_j, x) & = & 
  \begin{cases}
    \min\{\overr{\alpha}-cj, \overr{\beta}\} & \mbox{ if } \overr{\alpha}-cj \neq \overr{\beta},\\
    \overr{\beta} + \LCP(z_0[\overr{\beta}+1..|z_0|], x[\overr{\beta}+1..|x|]) & \mbox{ otherwise, and}
  \end{cases}\\
\LCS(z'_j, x') & = &
  \begin{cases}
    \min\{\overl{\alpha}+cj, \overl{\beta}\} & \mbox{ if } \overl{\alpha}+cj \neq \overl{\beta},\\
    \overl{\beta} + \LCS(z'_0[1..|z'_0|-\overl{\beta}], x'[1..|x'|-\overl{\beta}]) & \mbox{ otherwise,}
  \end{cases}
\end{eqnarray*}
where $\overr{\alpha} = \repr{pz_0}{c} - |p|$, $\overr{\beta} = \repr{px}{c} - |p|$, $\overl{\alpha} = \repl{z'_0p}{c} - |p|$ and $\overl{\beta} = \repl{x'p}{c}$.
\end{lemma}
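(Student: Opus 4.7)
The key observation is that because $c \leq |p|$ and $p$ occurs at positions $a_0, a_0+c, \ldots, a_0+ck$, consecutive occurrences overlap by $|p|-c$ characters, which forces $p$ itself to have period $c$. Hence there is a unique infinite periodic extension of $p$ of period $c$; denote it $p^{\infty}$. The quantity $\repr{pz_0}{c}$ is, by definition, the length of the longest prefix of $pz_0$ that agrees with $p^{\infty}$, and similarly for $\repr{px}{c}$.

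First I would establish a ``shift'' observation: since $pz_j = s[a_j..|s|]$ is obtained from $pz_0 = s[a_0..|s|]$ by deleting the first $cj$ characters, and $cj$ is a multiple of the period $c$, the longest prefix of $pz_j$ with period $c$ has length exactly $\repr{pz_0}{c} - cj = |p| + \overr{\alpha} - cj$. Stripping the common prefix $p$, this says that $z_j$'s initial segment of length $\overr{\alpha}-cj$ agrees with $p^{\infty}$ beyond $p$ and that this agreement is maximal. An analogous statement holds for $x$'s initial segment of length $\overr{\beta}$.

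Then I would finish $\LCP(z_j,x)$ by a case split. Since both $z_j$ and $x$ agree with $p^{\infty}$ on prefixes of lengths $\overr{\alpha}-cj$ and $\overr{\beta}$ respectively, they agree with each other on the first $\min\{\overr{\alpha}-cj, \overr{\beta}\}$ characters. When $\overr{\alpha}-cj \neq \overr{\beta}$, at the first position beyond this minimum the shorter side has already broken the period by maximality, whereas the longer side still matches $p^{\infty}$, so the two strings differ there; hence $\LCP(z_j,x) = \min\{\overr{\alpha}-cj, \overr{\beta}\}$. When $\overr{\alpha}-cj = \overr{\beta}$, both $z_j$ and $x$ break from $p^{\infty}$ at exactly position $\overr{\beta}+1$, but may happen to disagree with $p^{\infty}$ by the same character, so the comparison must continue. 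Using the identity $z_j = z_0[cj+1..|z_0|]$, the tail $z_j[\overr{\beta}+1..|z_j|]$ coincides with a suitable tail of $z_0$, which reduces the residual comparison to an $\LCP$ involving $z_0$ and $x$, yielding the recursive expression claimed in the lemma.

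The $\LCS$ formula follows by applying exactly the same argument to the reversed strings (using the fact that $p$ also has period $c$ when read right to left, so $p$ admits a unique leftward periodic extension governing the suffix agreements). The main obstacle I anticipate is the offset bookkeeping in the tie case $\overr{\alpha}-cj = \overr{\beta}$: one must verify carefully that the indices appearing in the recursive $\LCP$ on the right-hand side correctly isolate the ``post-period-break'' portions of $z_j$ and $x$, using the substring identity above and the specific relationship $cj = \overr{\alpha}-\overr{\beta}$ that holds only in that case.
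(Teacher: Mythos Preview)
Your argument is correct and matches the paper's own proof: it too establishes the shift identity $\repr{pz_j}{c}=|p|+\overr{\alpha}-cj$, then compares the maximal $c$-periodic prefixes of $pz_j$ and $px$ to conclude $\LCP(z_j,x)=\min\{\overr{\alpha}-cj,\overr{\beta}\}$ whenever these differ, treating the tie case and the $\LCS$ half by the obvious symmetry. Your concern about the tie-case bookkeeping is well placed: your computation gives $z_j[\overr{\beta}+1..]=z_0[cj+\overr{\beta}+1..]=z_0[\overr{\alpha}+1..]$ (using $cj=\overr{\alpha}-\overr{\beta}$), which is exactly right --- the paper's proof does not spell this out, and the displayed formula in the lemma appears to carry a misprint in the $z_0$-index.
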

\begin{proof}
Since $\repr{pz_j}{c} = \overr{\alpha}-cj + |p|$, 
both $pz_j$ and $px$ have a prefix of length $\min\{\overr{\alpha}-cj, \overr{\beta}\} + |p|$ with period $c$ (see also Fig.~\ref{fig:handle_ap}).
If $\overr{\alpha}-cj \neq \overr{\beta}$, either $pz_j$ or $px$ has a prefix of length $\min\{\overr{\alpha}-cj, \overr{\beta}\} + |p| + 1$ with period $c$
while the other does not,
and hence $\LCP(z_j, x) = \LCP(pz_j, px) - |p| = \min\{\overr{\alpha}-cj, \overr{\beta}\}$.
Only when the period breaks the periodicity, 
i.e., $\overr{\alpha}-cj = \overr{\beta}$, $\LCP(z_j, x)$ could expand.
Note that such expansion occurs at most once.
Similarly, since $\repl{z'_jp}{c} = \overl{\alpha}+cj$ we get the statement for $\LCS(z'_j, x')$.
\qed
\end{proof}

\begin{figure*}[hp]
  \begin{center}
    \includegraphics[scale=0.65]{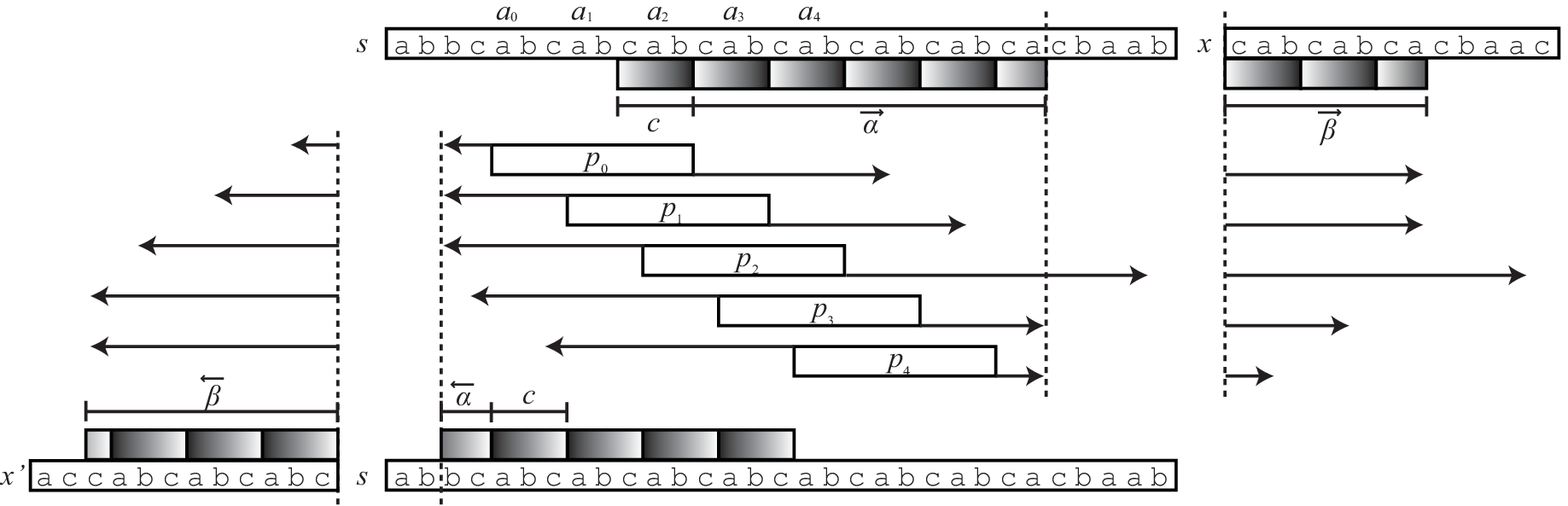}
  \end{center}
  \caption{
    Illustration for Lemma~\ref{lem:handle_ap}.
  }
  \label{fig:handle_ap}
\end{figure*}

In the next lemma, we show how to handle one of the arithmetic progressions 
computed in Step 2 of Case (3).
\begin{lemma} \label{lem:SLP_runs}
Let $X_i \rightarrow X_\ell X_r$ be a production of an SLP of size $n$ 
and $p$ be the suffix of $\derive(X_\ell)$ of length $2^{t-1}$.
Let $\{a_0, a_1, \dots, a_k\}$ be consecutive occurrences of $p'$ in $\derive(X_i)$
which form a single arithmetic progression,
which are computed in Step 2 of Case (3).
We can detect all runs corresponding to the occurrences of $p'$ by using $\LCE$ constant times.
Also, such runs are represented in constant space.
\end{lemma}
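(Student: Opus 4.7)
The plan is to turn Step~2 of the basic algorithm into closed-form expressions for the right extension $\plen_j$ and left extension $\slen_j$ attached to each occurrence $a_j$ of $p'$, by invoking Lemma~\ref{lem:handle_ap} with the seed $p$ (the suffix of $\derive(X_\ell)$) as the fixed reference. Let $a^*$ denote the position of $p$ inside $\derive(X_i)$ and set $x = \derive(X_i)[a^* + |p|..|\derive(X_i)|]$ and $x' = \derive(X_i)[1..a^* - 1]$. Then the candidate period for the pair $(a^*, a_j)$ is $d_j = a_j - a^* = d_0 + cj$, the right extension is $\plen_j = \LCP(z_j, x)$ and the left extension is $\slen_j = \LCS(z'_j, x')$ in the notation of the lemma, so this pair witnesses a run iff $\plen_j + \slen_j \geq d_j - |p|$.

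First I would compute the four constants $\overr{\alpha}$, $\overr{\beta}$, $\overl{\alpha}$, $\overl{\beta}$ with $O(1)$ $\LCE$ queries (two right-right queries on $\mathcal{S}$ and two left-left queries on its reverse SLP, via Lemma~\ref{lem:LCE}). This yields $\plen_j$ and $\slen_j$ as explicit piecewise functions of $j$, each with at most one singular index where the two arguments of the $\min$ in Lemma~\ref{lem:handle_ap} coincide; those at most two singular indices are dispatched individually in $O(1)$ extra $\LCE$ queries via the recursive ``otherwise'' branch of the lemma.

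Next I would partition $\{0, 1, \ldots, k\}$ into $O(1)$ contiguous sub-intervals determined by these branch points. On each sub-interval both $\plen_j$ and $\slen_j$ are affine functions of $j$ with slopes in $\{0, -c, +c\}$, so the condition $\plen_j + \slen_j \geq d_j - |p|$ becomes a linear inequality in $j$ and is satisfied on a contiguous sub-range $[j_1, j_2]$. Within such a sub-range the left endpoint $a^* - \slen_j$, the right endpoint $a_j + |p| - 1 + \plen_j$, and the period $d_j$ all shift in lock-step by $\pm c$ as $j$ increments, so this family of runs is captured by a single quintuplet $\quintuplet{\delta_1}{\delta_2}{\delta_3}{c}{j_2 - j_1 + 1}$. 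Aggregated over the $O(1)$ sub-intervals together with the $O(1)$ singular indices, this yields $O(1)$ quintuplets produced with $O(1)$ $\LCE$ queries, matching the claim.

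The main obstacle I anticipate lies in the sub-interval case analysis: I need to verify that branch combinations in which both $\plen_j$ and $\slen_j$ take the \emph{constant} branch of the $\min$ collapse to at most a single run (since in that case neither endpoint moves with $j$ while the ``period'' $d_j$ still slides, so only the $j$ actually satisfying the minimal-period requirement survives), and that the singular indices from the ``otherwise'' branch are not doubly counted by the adjacent linear-branch quintuplets. Once these corner cases are carefully dispatched, the constant bounds on $\LCE$ queries and on output space claimed by the lemma follow directly from the counts above.
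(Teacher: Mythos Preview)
Your setup matches the paper: you correctly instantiate Lemma~\ref{lem:handle_ap} with $x=\derive(X_r)$ and $x'=\derive(X_\ell)[1..|X_\ell|-|p|]$, and you compute $\overr{\alpha},\overr{\beta},\overl{\alpha},\overl{\beta}$ with four $\LCE$ calls. You also correctly spot that the crux is the branch analysis.

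However, your proposed resolution of that obstacle is wrong, and this is a genuine gap. In the ``both constant'' branch ($\plen_j=\overr{\beta}$, $\slen_j=\overl{\beta}$) you assert that ``neither endpoint moves with $j$''. This is false: the right endpoint is $a_j+|p|-1+\plen_j=a_0+cj+|p|-1+\overr{\beta}$, which does move by $c$ at every step, because $a_j$ itself slides. So the picture of a single fixed interval with only the period changing does not hold, and the minimal-period argument does not collapse the range. More generally, among the four branch combinations only the one with $\plen_j=\overr{\beta}$ and $\slen_j=\overl{\alpha}+cj$ produces triples of the specific shape $\triple{\delta_1-cj}{\delta_2+cj}{\delta_3+cj}$ required by the paper's quintuplet encoding; the other three combinations give endpoint patterns that do not fit a single quintuplet. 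Your plan of ``solve a linear inequality on each sub-interval and emit one quintuplet'' therefore does not go through as stated.

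The paper closes this gap with two ingredients you do not have. First, a Claim: if $\overr{\beta}+\overl{\alpha}\geq a_0-1+c$ then \emph{every} root detected from any $a_j$ is non-primitive, so nothing is a run and we may discard the whole progression. Second, under the complementary assumption $\overr{\beta}+\overl{\alpha}<a_0-1+c$, the paper shows that for every $j\geq\min\{j',j''\}$ (except the two singular indices $j',j''$) the quantity $\plen_j+\slen_j-cj$ is already below $a_0-1$ and strictly decreasing, so the repetition test fails there. Hence only the first branch combination ever contributes a range, and one quintuplet plus the two singular indices suffice. The paper also proves that the roots in that main range are primitive, which you do not address; without this, what you output are repetitions, not runs.
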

\begin{proof}
We apply Lemma~\ref{lem:handle_ap} 
by letting $s = \derive(X_i)$, $x = \derive(X_r)$ and $x' = \derive(X_\ell)[1..|\derive(X_\ell)|-|p|]$.
First we compute $\overr{\alpha} = \LCP(pz_0, p[c+1..|p|]z_0) + c - |p|$, $\overr{\beta} = \LCP(px, p[c+1..|p|]x) + c - |p|$, 
$\overl{\alpha} = \LCS(z'_0p, z'_0p[1..|p| - c]) + c - |p|$ and $\overl{\beta} = \LCS(x'p, x'p[1..|p| - c]) + c - |p|$ by using $\LCP$ and $\LCS$ four times.

\begin{claim}
If $\overr{\beta} + \overl{\alpha} \geq a_0 - 1 + c$, the root of any repetition detected from $a_j$ is not primitive.
\end{claim}
\begin{paragraph} 
{\itshape Proof of Claim.}
If $\overr{\beta} + \overl{\alpha} \geq a_0 - 1 + c$, $pyp$ must have period $c$, where $y$ is the prefix of length $a_1 - 1$ of $x$.
Since $pyp[c+1..c+p] = p$, $|yp|-c$ is a period of $yp$.
It follows from the periodicity lemma that $py$, as well as every $a_j + |p| - 1$, is divisible by greatest common divisor of $c$ and $|yp|-c$,
and hence the root of any repetition detected from $a_j$ is not primitive.
\qed
\end{paragraph}
\vspace*{5pt}

From the above claim, in what follows we assume that 
$\overr{\beta} + \overl{\alpha} < a_0 - 1 + c$.
Let $d_j = a_j - 1 + |p| = a_0 - 1 + |p| + cj$, and then
we want to check if $\LCP(z_j, x) + \LCS(z'_j, x') \geq d_j - |p| = a_0 - 1 + cj$,
or equivalently, $\LCP(z_j, x) + \LCS(z'_j, x') - cj \geq a_0 - 1$.

Let $j' = \min\{j \geq 0 \mid \overr{\alpha}-cj \leq \overr{\beta}\}$ and $j'' = \min\{j \geq 0 \mid \overl{\alpha}+cj \geq \overl{\beta}\}$.
For any $0 \leq j < \min\{j', j''\}$, it follows from $\LCP(z_j, x) = \overr{\beta}$ and $\LCS(z'_j, x') = \overl{\alpha}+cj$ that
$\LCP(z_j, x) + \LCS(z'_j, x') - cj = \overr{\beta} + \overl{\alpha}$, 
and hence a repetition $\triple{\delta_1-cj}{\delta_2+cj}{\delta_3+cj}$ appears iff $\overr{\beta} + \overl{\alpha} \geq a_0 - 1$,
where $\delta_1 = |x'|+1-\overl{\alpha}, \delta_2 = a_0+|p|+\overr{\beta}-1$ and $\delta_3 = a_0+|p|-1$ are constants.

We show that the root of such repetition $\triple{\delta_1-cj}{\delta_2+cj}{\delta_3+cj}$ is primitive.
Assume on the contrary that it is not primitive, namely, $s^\prime = s[\delta_1-cj..\delta_2+cj] = u^q$ with $|u| \leq (\delta_3+cj)/2$ and $q \geq 4$.
Evidently, $\repr{s^\prime}{c} = \overr{\beta} + \LCS(z'_j, x') + |p| = \overr{\beta} + \overl{\alpha} + |p| + cj$.
It follows from $a_0 - 1 \leq \overr{\beta} + \overl{\alpha} < a_0 - 1 + c$ that $\delta_3+cj \leq \repr{s^\prime}{c} < \delta_3+cj+c < |s^\prime|$.
Since $2|u| \leq \repr{s^\prime}{c}$ and $c \leq |p| \leq (\delta_3+cj)/2 \leq \delta_3+cj - |u| \leq \repr{s^\prime}{c} - |u|$, 
$\repr{s^\prime[1..|s^\prime|-|u|}{c} = \repr{s^\prime[|u|+1..|s^\prime|]}{c} + |u|$,
however both $s^\prime[1..|s^\prime|-|u|]$ and $s^\prime[|u|+1..|s^\prime|]$ are $u^{q-1}$, a contradiction.
Therefore, for all $0 \leq j < \min\{j', j''\}$, $\triple{\delta_1-cj}{\delta_2+cj}{\delta_3+cj}$ are runs, 
and they can be encoded by a quintuplet $\quintuplet{\delta_1}{\delta_2}{\delta_3}{c}{\min\{j', j''\}}$.
For any $\min\{j', j''\} \leq j \leq k$ except for $j = j'$ or $j''$, 
$\LCP(z_j, x) + \LCS(z'_j, x') - cj$ is monotonically decreasing by at least $c$ and 
satisfies $\LCP(z_j, x) + \LCS(z'_j, x') - cj < \overr{\beta} + \overl{\alpha} - c < a_0 - 1$,
and hence, no repetition appears.
For $j'$ and $j''$, we can check whether these two occurrences become runs or not by using $\LCE$ constant times.
\qed
\end{proof}

\begin{figure*}[t]
  \begin{center}
    \includegraphics[scale=0.65]{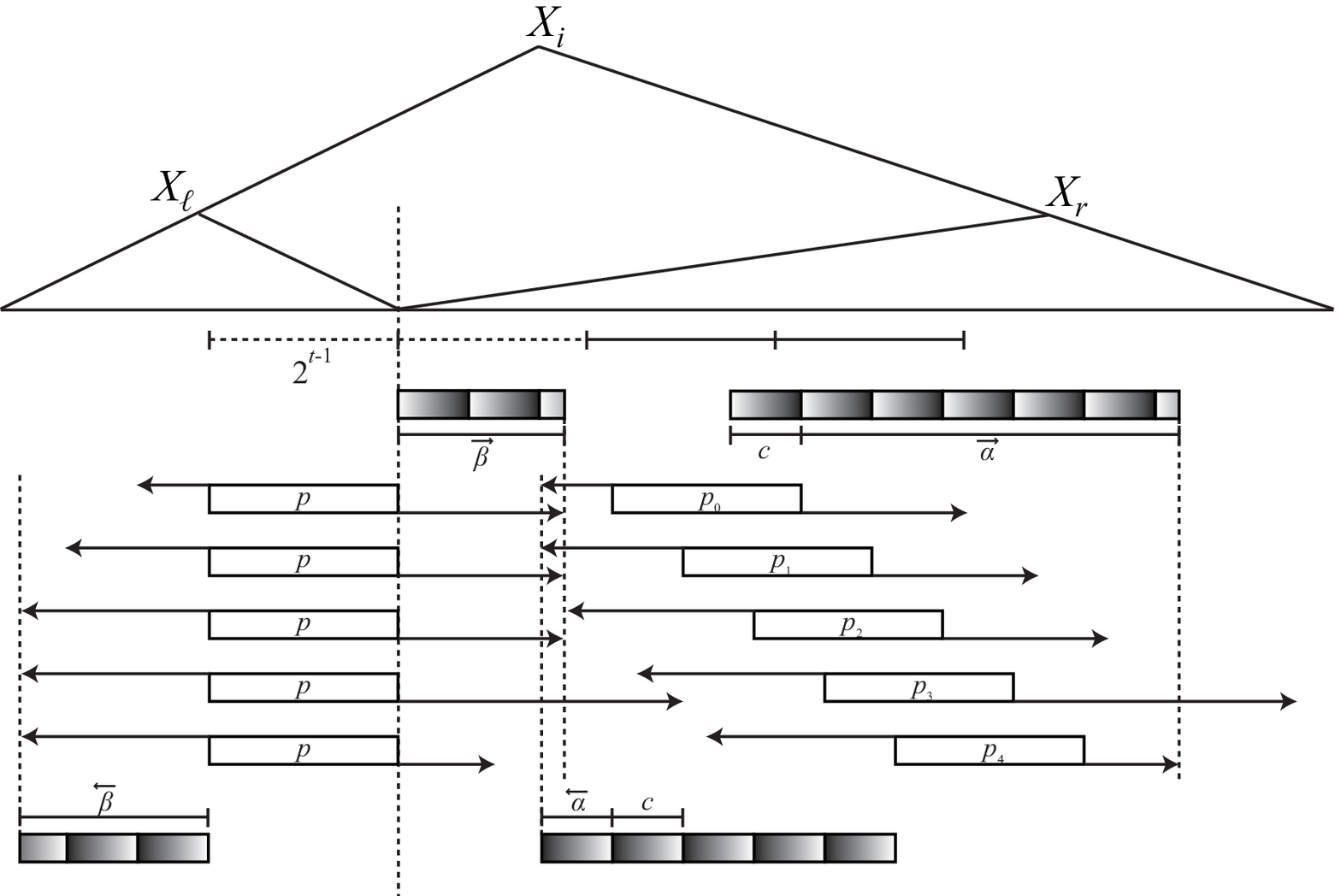}
  \end{center}
  \caption{
    Illustration for Lemma~\ref{lem:SLP_runs}.
    Four runs are found. 
    Here $j' = 3$ and $j'' = 2$.
    The runs from $p_0$ and $p_1$ are encoded by a quintuplet.
    For each $j'$ and $j''$, the run is separately encoded by a quintuplet that shows a single run.
  }
  \label{fig:overlap_run}
\end{figure*}

The other cases can be processed in a similar way.

A minor technicality is that we may redundantly find the same run in different cases.
However, we can avoid duplicates by simply looking into the currently computed runs when we add new runs, spending $O(\log N)$ time.
Also, we can remove repetitions whose root are not primitive by just choosing the smallest period among the repetitions with the same interval.

\newpage

\section*{Appendix B: Proof of Lemma~\ref{lem:fm}}

\begin{proof}
The outline of our algorithm to compute $\FirstMismatch$ follows~\cite{MST97}
which used a slower algorithm for $\Match$.
Assume $|X_i|-k+1 \leq |X_j|$ holds.

If $X_j \rightarrow a$ with $a \in \Sigma$,
then 
\[
 \FirstMismatch(X_i, X_j, k) = 
  \begin{cases}
   1 & \mathrm{if} \ \Match(X_i, X_j, k) = \mathrm{true}, \\ 
   0 & \mathrm{if} \ \Match(X_i, X_j, k) = \mathrm{false}. \\ 
  \end{cases}
\]

If $X_j \rightarrow X_{\ell(j)} X_{r(j)}$,
then we can recursively compute $\FirstMismatch(X_i, X_j, k)$ as follows:
\begin{eqnarray}
\lefteqn{\FirstMismatch(X_i, X_j, k)} \nonumber \\
& = &
   \begin{cases}
    \FirstMismatch(X_i, X_{r(j)}, k + |X_{\ell}|) & \mathrm{if} \ \Match(X_i, X_{\ell(j)}, k) = \mathrm{true}, \\
    \FirstMismatch(X_i, X_{\ell(j)}, k) & \mathrm{if } \ \Match(X_i, X_{\ell(j)}, k) = \mathrm{false}.
   \end{cases} \label{eqn:fm}
\end{eqnarray}

We apply Lemma~\ref{lem:match} to $\mathcal{S}$, 
pre-processing SLP $\mathcal{S}$ in $O(n^2 h)$ time and $O(n^2)$ space,
so that query $\Match(X_i, X_{j^\prime}, k^\prime)$ is answered in $O(\log N)$ time
for any variable $X_{j^\prime}$ and integer $k^\prime$.
Note that in either case of Equation~\ref{eqn:fm},
the height of the second variable decreases by 1.
Hence we can compute $\FirstMismatch(X_i, X_j, k)$ in $O(h \log N)$ time,
after the $O(n^2 h)$-time $O(n^2)$-space pre-processing.
\qed
\end{proof}

\newpage

\section*{Appendix C: Figures}

\begin{figure*}[hp]
  \begin{center}
    \includegraphics[scale=0.6]{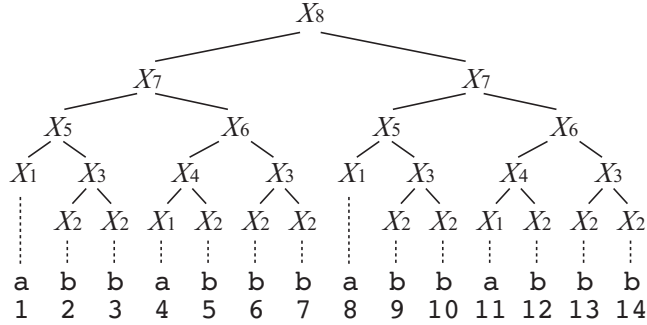}
  \end{center}
  \caption{
    The derivation tree of
    SLP $\mathcal{S} = \{ X_1 \rightarrow \mathtt{a}$, $X_2 \rightarrow \mathtt{b}$, 
    $X_3 \rightarrow X_2 X_2$, $X_4 \rightarrow X_1 X_2$, $X_5 \rightarrow X_1 X_3$,
    $X_6 \rightarrow X_4 X_3$, $X_7 \rightarrow X_5 X_6$, $X_8 \rightarrow X_7 X_7$ \},
    representing string $s = \mathtt{abbabbbabbabbb}$.
  }
  \label{fig:SLP}
\end{figure*}

\begin{figure}[hp]
 \centerline{\includegraphics[width=0.6\textwidth]{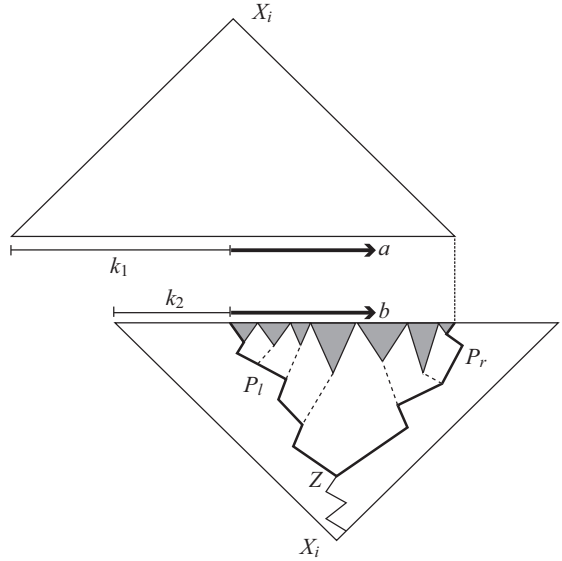}}
 \caption{
  Lemma~\ref{lem:LCE}: Illustration for computing $\LCE(X_i, k_1, k_2)$.
  The roots of the gray subtrees are labeled by the variables in $U$.
  We find the first variable $X_{u(p^\prime)}$ in the list $U$ with which
  the $\Match$ query returns false.
  We then perform the $\FirstMismatch$ query for  
  $X_i$ and $X_{u(p^\prime)}$ using the appropriate offset.
 }
 \label{fig:lce}
\end{figure}

\begin{figure}[hp]
 \centerline{\includegraphics[width=1.0\textwidth]{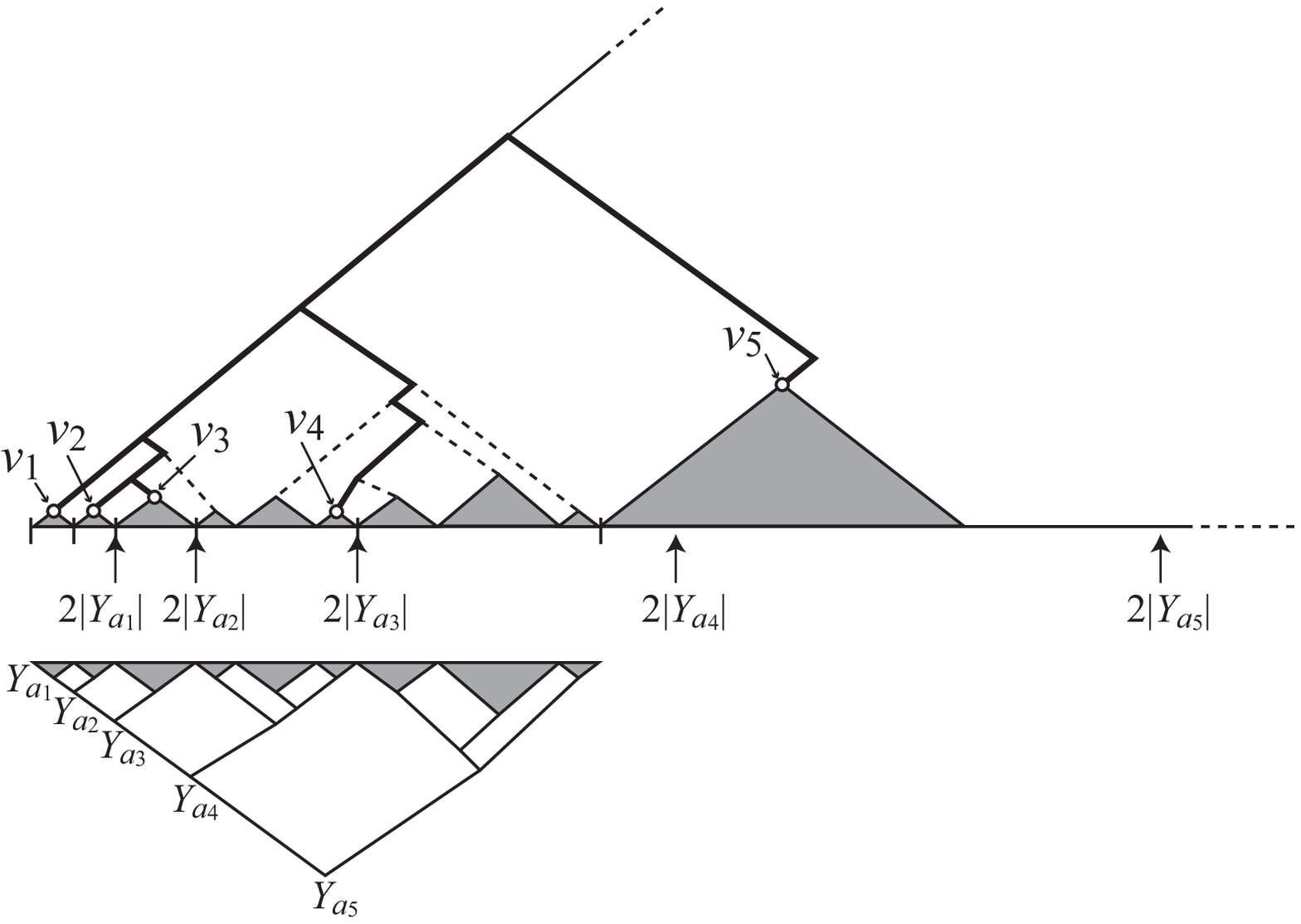}}
 \caption{
  Lemma~\ref{lem:approx_doubling}: Illustration for approximate doubling.
  The prefix variables up to $Y_{a_5}$ have been created.
  The traversals for $v_2$, $v_3$, $v_4$ end due to Condition 1 and that for $v_5$ ends due to Condition 2.
  Each traversed edge (depicted in bold) contributes to at most one new variable for some segment.
  Next, we will resume the traversal from $v_5$ targeting position $2|Y_{a_5}|$, and iterate the procedure until we get the last variable $Y_{a_k}$.
  The total number of bold edges can be bounded by $O(n)$ thanks to Condition 2.
 }
 \label{fig:approx_doubling}
\end{figure}

\end{document}